\newcommand{\HRule}
{\noindent\rule{\linewidth}{0.1mm}\newline}
\newtheorem{lemma}{Lemma}
\newtheorem{assumption}{Assumption}
\newtheorem{definition}{Definition}
\newtheorem{proposition}{Proposition}
\newtheorem{remark}{Remark}
\newtheorem{theorem}{Theorem}
\newtheorem{corollary}{Corollary}
\newcommand{\mysquare}[1][black]{\textcolor{#1}{\ensuremath\blacksquare}}
\definecolor{mplblue}{HTML}{1f77b4}
\definecolor{mplgreen}{HTML}{2ca02c}
\DeclareMathOperator*{\argmin}{arg\,min}
\begin{document}

\title{Computation-Aware Learning for Stable Control with Gaussian Process}



\author{\authorblockN{Wenhan Cao$^{1,2}$ 
Alexandre Capone$^{3,4}$
Rishabh Yadav$^{1}$  
Sandra Hirche$^{4}$ Wei Pan$^{1}$}
\authorblockA{$^{1}$University of Manchester~~~$^{2}$Tsinghua University~~~$^{3}$Carnegie Mellon University~~~$^{4}$Technical University of Munich}}

\maketitle

\begin{abstract}
In Gaussian Process (GP) dynamical model learning for robot control, particularly for systems constrained by computational resources like small quadrotors equipped with low-end processors, analyzing stability and designing a stable controller present significant challenges. This paper distinguishes between two types of uncertainty within the posteriors of GP dynamical models: the well-documented mathematical uncertainty stemming from limited data and computational uncertainty arising from constrained computational capabilities, which has been largely overlooked in prior research. Our work demonstrates that computational uncertainty, quantified through a probabilistic approximation of the inverse covariance matrix in GP dynamical models, is essential for stable control under computational constraints. We show that incorporating computational uncertainty can prevent overestimating the region of attraction, a safe subset of the state space with asymptotic stability, thus improving system safety. Building on these insights, we propose an innovative controller design methodology that integrates computational uncertainty within a second-order cone programming framework. Simulations of canonical stable control tasks and experiments of quadrotor tracking exhibit the effectiveness of our method under computational constraints.
\end{abstract}

\IEEEpeerreviewmaketitle
\footnotetext{\fontsize{8}{8} Website: {\href{https://sites.google.com/view/computation-gp}{\texttt{https://sites.google.com/view/computation-gp}}}
\\ 
Correspondence to {\href{wei.pan@manchester.ac.uk}{\texttt{{wei.pan}@manchester.ac.uk}}}}

\section{Introduction}
Dynamical model learning using Gaussian processes (GP) is popular in the field of robotic control \cite{nguyen2008local,deisenroth2013gaussian,jang2020multi}. The ability to update the model online is particularly desirable as it enables robotic systems to adapt to unpredictable and changing environments \cite{haddadi2008online,nguyen2010real,fang2019vision,wilcox2020solar}. Using dynamically updated models makes it feasible to design controllers with stability guarantees \cite{wen1990unified, santibanez1997strict, khansari2014learning, giesl2015review, nguyen2015optimal}. This adaptability is crucial in aerial robotics, where the capacity to quickly adjust to new conditions is key to maintaining both stability and optimal performance. Existing approaches assume ample computational resources when learning dynamical models online \cite{berkenkamp2016safe,fisac2018general,wang2018safe,umlauft2018uncertainty,lederer2019uniform,beckers2019stable,khan2021safety,castaneda2021gaussian}. However, low-end computational hardware, such as those of tiny robots \cite{neuman2022tiny}, often faces limitations, and even high-end systems can suffer performance degradation due to insufficient resource allocation or high temperatures \cite{zhou2010performance}. Consequently, computational errors are inevitable, whether due to the early termination of the learning process or the processing of only a partial batch of data.

Computational errors in  dynamical model learning can deteriorate control performance. An illustrative example can be found in our experiment (see Section~\ref{exp:quad}), where we consider a quadrotor tracking control task. This task utilizes the conjugate gradient (CG) method, an iterative approach, to compute the GP posterior dynamical model. We then synthesize a controller with constraints constructed by the control Lyapunov function (CLF) \cite{sontag1989universal,castaneda2021gaussian,khan2021safety}. The results demonstrate that early termination of the CG algorithm can lead to increased tracking error (see Fig.~\ref{fig.tracking error}). Intuitively, under a fixed operation time budget, low-end computational hardware allows fewer iterations, resulting in larger model errors, and consequently, greater tracking errors than its high-end counterpart.

\begin{figure}
    \centering
    \includegraphics[width=1\linewidth]{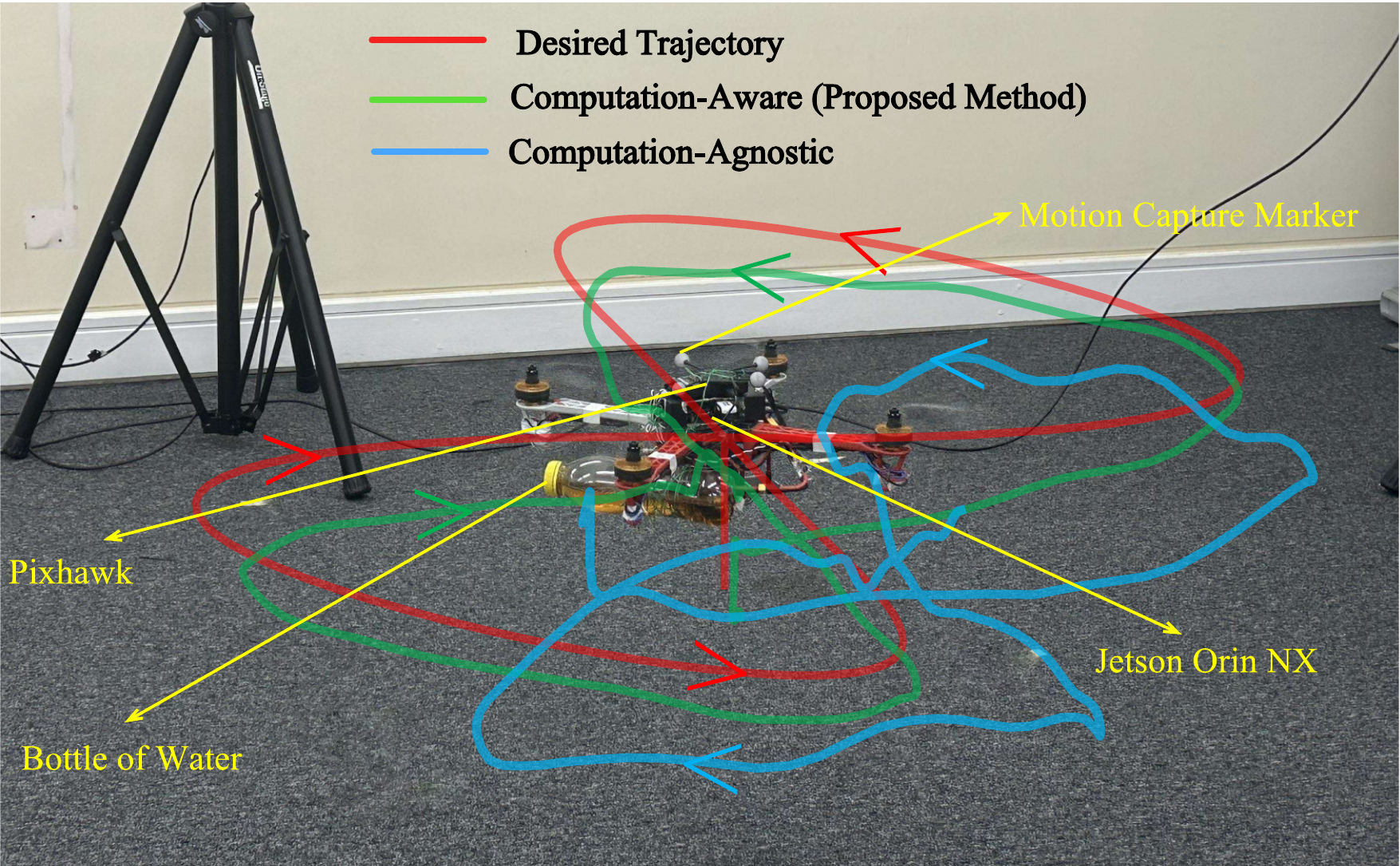}
    \caption{Quadrotor tracking control under computational constraints. The green line denotes the proposed computation-aware learning control method in this paper, showing improved tracking accuracy due to adaptive, conservative stability constraints compared to the agnostic one (blue line).}
    \vspace{-0.35cm}
    \label{fig.intro}
\end{figure}

\begin{figure*}[!t]
    \centering
    \includegraphics[width=1.0\linewidth]{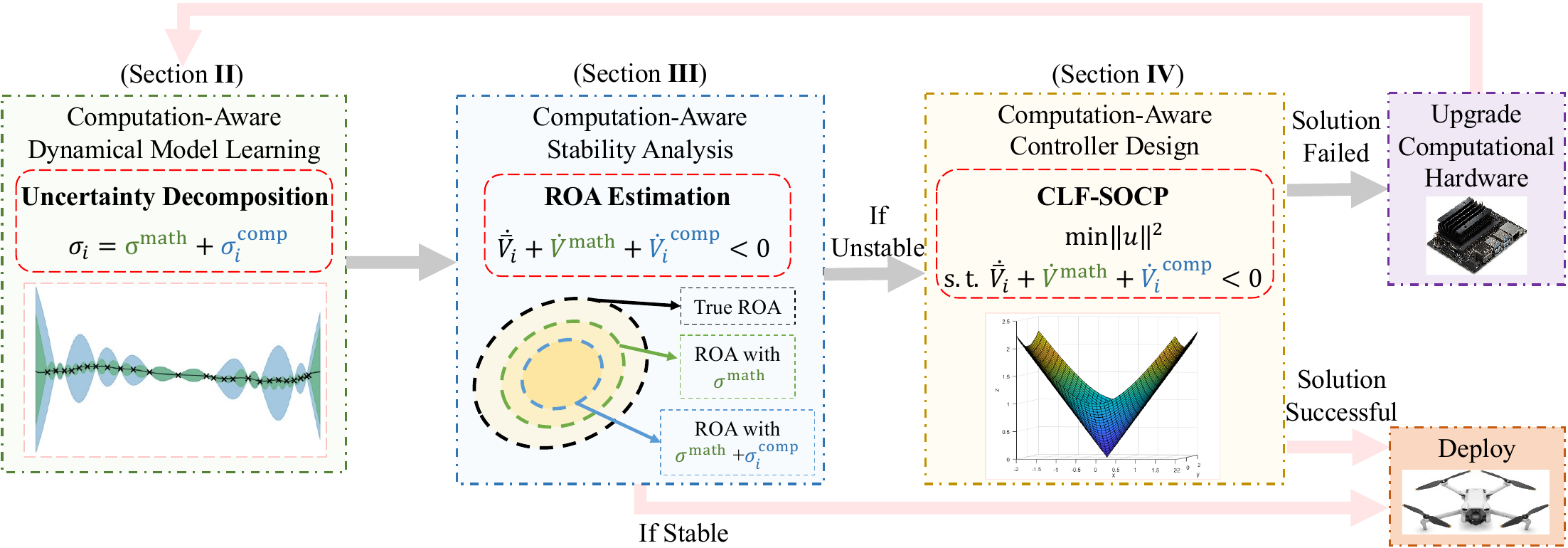}
    \caption{Workflow of computation-aware learning for stable control. This figure illustrates the three main stages of developing a control system with computational constraints. Initially, in Section \ref{sec.model learning} (computation-aware dynamical model learning), the model's uncertainty is broken down into mathematical and computational parts. Following that, in Section \ref{sec.stability analysis} (computation-aware stability analysis), the system's stability is analyzed by estimating its Region of Attraction (ROA) under various uncertainties. In Section \ref{sec.controller design} (computation-aware controller design), a controller is designed to ensure system stability with minimal control effort, using the control lyapunov function-second order cone programming (CLF-SOCP) method. The flowchart concludes with a decision-making process that evaluates whether the system's computational capacity needs upgrading before deployment.}
    \vspace{-0.5cm}
    \label{fig.outline}
\end{figure*}

Given the inevitability of computational errors during dynamical model learning, understanding their impact on system stability and incorporating them into controller design is crucial. To address this, this paper proposes a computation-aware learning framework for stable control, which includes computation-aware dynamical model learning, computation-aware stability analysis, and computation-aware controller design, as described in Fig.~\ref{fig.outline}. The contributions of this paper can be summarized as follows:

\begin{itemize}
    \item We demonstrate that the posterior of a GP dynamical model for a control-affine system comprises two kinds of uncertainty: mathematical uncertainty and computational uncertainty, arising due to limited data and constrained computation, respectively. This is achieved by exploiting the iterative approximation of the inverse of the covariance matrix in a probabilistic manner.
   
    \item We quantify the impact of computation on the stability of learning-based systems by examining the derivative of the Lyapunov function along the trajectories of the dynamical system. Based on this stability condition, the region of attraction (ROA) of the real system is estimated by discretizing the state space with Lipschitz conditions.
    
    \item We formulate a second-order cone programming (SOCP) approach to synthesize a minimum-norm stabilizing controller, taking into account both the mathematical and computational uncertainties of the learned dynamical model by leveraging the control Lyapunov function. Additionally, we demonstrate that an explicit control formula can be constructed, extending Sontag's universal formula to include both forms of uncertainty.
\end{itemize}

The remainder of this paper is organized as follows: Section \ref{sec.model learning} introduces computation-aware GP dynamical model learning. Section \ref{sec.stability analysis} presents the computation-aware stability analysis. Section \ref{sec.controller design} describes the computation-aware controller design. The conclusions and discussions are provided in Section~\ref{sec.conclusion}.


\section{Computation-Aware Model Learning}\label{sec.model learning}
In this section, we address the following question:
\begin{mdframed}
\textbf{Q1}: How can the computational error in dynamical model learning be quantified? 
\end{mdframed}
The key idea is to regard the computation of the inverse of the covariance
matrix in traditional GP dynamical model learning as a probabilistic inference problem. Consider a continuous-time control affine system 
\begin{equation}\label{eq.system model}
\dot{x} = f(x) + g(x)u,  
\end{equation}
with system state \( x \in \mathcal{X} \subseteq \mathbb{R}^n \) and control input \( u \in \mathcal{U} \subseteq \mathbb{R}^m\). In \eqref{eq.system model}, the symbols $f: \mathbb{R}^n \to \mathbb{R}^n$ and $g: \mathbb{R}^n \to \mathbb{R}^{n \times m}$ denote unknown functions that represent the drift dynamics and
the input matrix, respectively. 
We assume that the unknown functions $f$ and $g$ have bounded reproducing kernel Hilbert space (RKHS) norm \cite{berlinet2011reproducing}, which is a typical assumption when using GP in robotics \cite{berkenkamp2016safe,beckers2019stable}:
\begin{assumption}[Bounded RKHS norm]\label{assump.bounded RKHS}
We assume that $f \in \mathcal{H}_f$ and $g \in \mathcal{H}_g$, where $\mathcal{H}_f$ and $\mathcal{H}_g$ represent the RKHSs induced by continuously differentiable and bounded kernels $k^f(x, x^{\prime})$ and $k^g(x, x^{\prime})$, i.e., $f$ and $g$ have a bounded RKHS norm with respect to continuously differentiable bounded kernels $k^f(x, x^{\prime})$ and $k^g(x, x^{\prime})$. Furthermore, we know corresponding upper bounds $B_f$ and $B_g$, i.e.,  $\left\|f \right\|_{\mathcal{H}_f} \leq B_f$ and $\left\|g \right\|_{\mathcal{H}_g} \leq B_g$.
\end{assumption} 
Initially, we assume that we have access to measurements \( \dot{x} = f(x) + g(x)u\), which is a common assumption for GP dynamical model learning \cite{berkenkamp2016safe,umlauft2019feedback,lederer2020training}.
In practice, the derivative of the state $\dot{x}$ could be difficult to acquire and often substituted with a discrete-time approximation \cite{berkenkamp2016safe}. 
One notable challenge is to design the GP prior $f(x) + g(x)u \sim \mathcal{GP} (\mu(x, u), k(x,u,x^{\prime},u^{\prime}) )$ to differentiate the effect of the drift dynamics $f$ and the input matrix $g$ for the control-affine system \eqref{eq.system model}. 
Following the structure of a control affine system, as suggested in \cite{umlauft2019feedback,castaneda2021gaussian}, we choose the GP prior mean to be the nominal model, denoted as \(\mu(x, u) = \hat{f}(x) + \hat{g}(x)u\). Here, \(\hat{f}(x)\) and \(\hat{g}(x)\) represent the nominal parts of the drift dynamics and the control matrix, respectively, which can be modeled with relative ease. Furthermore, we select the GP kernel \(k(x,u, x^{\prime},u^{\prime})\) as the composite GP kernel, defined as 
\begin{equation}\label{eq.composite kernel}
k\left(x,u, x^{\prime},u^{\prime}\right) \triangleq k^{f}\left(x, x^{\prime}\right)+\sum_{j=1}^m u_j k^{g_{j}}\left(x, x^{\prime}\right) u_j^{\prime}.
\end{equation}
Here, \(k^f, k^{g_j}: \mathbb{R}^{n \times n} \to \mathbb{R}\) are kernels that capture the behavior of the individual entries of $f$ and $g$, respectively; $u_j$ is the $j_{\text{th}}$ dimension of the control input $u$, and $g_j$ is the $j_{\text{th}}$ column of the input matrix $g$.
\begin{remark}
Typically, GP learning addresses single-dimensional output. For a vector-valued GP dynamical model, the common approach is to define a separate GP prior for each dimension \citep{castaneda2021gaussian}. Following the conventions of previous works \cite{berkenkamp2016safe,lederer2020training}, we assume $n=1$ and $m=1$ to simplify the notation. In this case, the GP model $f+gu$ is a scalar function, and the kernel function becomes 
$k(x,u, x',u') = k^f(x, x') + u k^g(x, x')u'$.
\end{remark}
\begin{remark}
In robotic control scenarios such as quadrotor tracking, learning a GP dynamical model online is crucial, especially when dealing with varying dynamics.  For example, when quadrotors carry payloads like a bottle of water, the fluid dynamics induced by the swaying water cannot be precisely modeled and are considered disturbances, necessitating online learning. As demonstrated in our final experiment, the inputs to the GP for this water-carrying task include the quadrotor's global position, velocity and attitude, all of which are obtained through a motion capture system. The output of the GP, which serves as the label for online dynamical learning, is the disturbance force caused by fluid dynamics. This disturbance is deduced from the quadrotor’s nominal model using acceleration, thrust, and the gravity force.
\end{remark}    
For \( N \) collected measurements \( Y = \begin{bmatrix}
f(x_1) + g(x_1) u_1,\, \ldots,\, f(x_N) + g(x_N) u_N  
\end{bmatrix}^{\top} \) at state control pairs \( [X, \, U] = \begin{bmatrix}
(x_1, u_1),\, \ldots,\, (x_N, u_N))    
\end{bmatrix} \), 
the posterior mean \( \mu_{*} \) and covariance functions \( k_{*} \) are given by
\begin{subequations}\label{eq.GP perfect posterior}
\begin{align}
&\mu_*(x, u) 
\nonumber \\
=& \mu(x,u) + k(x,u,X,U) K^{-1}(Y-\mu(X, U)),\\
&k_{*}(x,u, x^{\prime},u^{\prime}) \nonumber \\
=& k(x,u, x^{\prime},u^{\prime})- k(x,u, X,U)K^{-1}k(X,U, x^{\prime},u^{\prime}).
\end{align}    
\end{subequations}
Here, the covariance matrix $K$, also known as kernel Gram matrix, is computed as $K \triangleq K_f + U^{\top}_{\mathrm{diag}} K_g U_{\mathrm{diag}}$, with $[K_f]_{(p,q)} \triangleq k^f(x_p, x_q)$, $[K_g]_{(p,q)} \triangleq k^g(x_p, x_q)$ and $U_{\mathrm{diag}} \triangleq \mathrm{diag} \left\{ u_1, \, u_2, \, ..., \, u_N\right\}$;
$k(x,u, X,U)$ is computed as $
k(x,u, X,U) \triangleq k^f(x,X) + u k^g(x,X) U_{\mathrm{diag}}$, with $[k^f(x, X)]_{(p)} = k^f(x, x_p)$, $[k^g(x, X)]_{(p)} = k^g(x, x_p)$; and $\mu(X, U)$ is defined by $[\mu(X, U)]_{(p)} = f(x_p) + g(x_p)u_p$.

In \eqref{eq.GP perfect posterior}, computing the inverse of $K$ poses a significant computational burden, scaling cubically with the number of measurements. When the computation is performed exactly, i.e., we have sufficient computational resources to compute $K^{-1}$ exactly, the modeling error is solely quantified by the quantity
$k_{*}(x,u, x^{\prime}, u^{\prime})$ in \eqref{eq.GP perfect posterior}, which corresponds to the mathematical uncertainty of the learned model \citep{williams2006gaussian,hennig2022probabilistic,wenger2022posterior}.

However, in real-world applications, computing $v_{*}=K^{-1}(Y-\mu(X,U))$ exactly is often prohibitive due to constrained computational resources, especially in online learning scenarios \cite{nguyen2008local}. Therefore, the effect of constrained computation must be considered. Analogous to how limited data induces modeling error captured by mathematical uncertainty in GP learning, constrained computation introduces an approximation error that must be accounted for as computational uncertainty \cite{hennig2022probabilistic,wenger2022posterior}. From this perspective, the GP should return a combined uncertainty that includes both mathematical and computational aspects \cite{hennig2022probabilistic,wenger2022posterior}.

As shown in \cite{wenger2022posterior}, $v_*$ can be treated as a random variable with prior distribution $v_* \sim \mathcal{N}(0, K) = \mathcal{N}(v_0, \Sigma_0)$ to include computational uncertainty. In this case, using the canonical CG method \cite{hestenes1952methods} to compute $K^{-1}$ can be regarded as performing Bayesian inference on $v_*$, with each iteration conditioning the current belief distribution $v_* \sim \mathcal{N}(v_{i-1}, \Sigma_{i-1})$ on the linear projection of the residual $s_i^{\top}r_{i-1} = s_i^{\top}(y-\mu(X,U) - Kv_{i-1}) = s_i^{\top}K(v_* - v_{i-1})$. The resulting belief update at the $i_{\text{th}}$ iteration now becomes \cite{wenger2022posterior}: 
\begin{equation}\nonumber
\begin{aligned}
v_i &= v_{i-1} + \Sigma_{i-1}Ks_i (s_i^{\top}K\Sigma_{i-1}Ks_i)^{-1}s_i^{\top}K(v_* - v_{i-1})
\\
&= C_i(y-\mu(X,U)),
\\
\Sigma_{i} &= \Sigma_{i-1} - \Sigma_{i-1}Ks_i (s_i^{\top}K\Sigma_{i-1}Ks_i)^{-1}s_i^{\top}K\Sigma_{i-1} \\
&= K^{-1} - C_i.
\end{aligned} 
\end{equation}
For the belief $v_* \sim \mathcal{N}(v_i, \Sigma_i)$, we have
\begin{equation}\nonumber
\begin{aligned}
p(f+gu) &= \int p(f+gu|v_*)p(v_*) \, \mathrm{d}v_* = \mathcal{N}(\mu_i, k_i),
\end{aligned}  
\end{equation}
with
\begin{subequations}\label{eq.mu_i, k_i}
\begin{align}
& \mu_i(x,u) \nonumber \\ 
=& \mu(x,u) + k(x,u, X,U)v_i, \\
&k_i(x,u, x^{\prime},u^{\prime}) \nonumber\\
=& \underbrace{k(x,u, x^{\prime},u^{\prime}) - k(x,u,X,U)C_ik(X,U, x^{\prime},u^{\prime})}_{\textbf{Combined Uncertainty}} \nonumber\\
=& \underbrace{k(x,u, x^{\prime},u^{\prime}) - k(x,u, X,U)K^{-1}k(X,U, x^{\prime},u^{\prime})}_{\textbf{Mathematical Uncertainty}} \nonumber\\
&+ \underbrace{k(x,u, X,U)\Sigma_ik(X,U, x^{\prime},u^{\prime})}_{\textbf{Computational Uncertainty}} \nonumber
\\
\triangleq& k^{\textup{math}}(x, u, x^{\prime}, u^{\prime}) + k^{\textup{comp}}_i(x,u,x^{\prime}, u^{\prime}).
\end{align}    
\end{subequations}

In \eqref{eq.mu_i, k_i}, $C_i$ is defined as $C_i = S_i(S_i^\top \hat{K}S_i)^{-1}S_i^\top$, which is a rank-$i$ matrix. Here, $S_i$ is given by $S_i \triangleq \begin{bmatrix} s_1 & s_2 & \cdots & s_i \end{bmatrix} \in \mathbb{R}^{N \times i}$, where each $s_i$ represents the direction of the CG update in the $i_{\text{th}}$ iteration. It satisfies the relation $s_i = \frac{r_{i-1}^\top K s_{i-1}}{s_{i-1}^\top K s_{i-1}} s_{i-1}$, where $r_{i-1}$ is the residual at the $(i-1)_{\text{th}}$ iteration. 
As we perform more computations, the uncertainty about $v_*$ contracts as $C_i \to K^{-1} = \Sigma_0$ as $i \to N$. After $N$ iterations, $C_N = K^{-1}$, the computational uncertainty reduces to zero, and the combined uncertainty reduces to the mathematical uncertainty. 

The decomposition of combined uncertainty in GPs into mathematical and computational uncertainties is already well-established in prior research \cite{hennig2022probabilistic,wenger2022posterior}. These studies primarily emphasize the benefits of leveraging computational uncertainty to enhance generalization for supervised learning tasks with medical datasets \cite{hennig2022probabilistic,wenger2022posterior}.  Different from those works, this paper focuses on robotic control. We provide a novel insight that computational uncertainty can be utilized to quantify computational errors when learning dynamical models. More importantly, in the subsequent sections, we illustrate how this computational uncertainty can further be leveraged to better estimate the region of attraction and design a stable controller in computation-constrained robotic systems.

\begin{remark}
Besides CG, other methods like Cholesky decomposition \cite{higham1990analysis,krishnamoorthy2013matrix} can also be employed for GP dynamical model learning, and the computational uncertainty can similarly be quantified \cite{wenger2022posterior,hennig2022probabilistic}.    
\end{remark}

Leveraging the convergence result of general GP models from Corollary 1 in \cite{wenger2022posterior} and combining it with Assumption~\ref{assump.bounded RKHS}, the combined uncertainty can be utilized to establish the convergence of the posterior mean of the learned dynamical GP model:

\begin{lemma}[Pointwise Convergence of the Posterior Mean]\label{lemma.worst case error}
$\forall \, [x,u] \in \mathcal{X} \times \mathcal{U}$, we have $f(x) + g(x)u \in \mathcal{H}_k$ and
\begin{align}\label{eq.pointwise convergence}
\vert f(x) + g(x)u - \mu_i(x,u) \vert &\leq B_{f,g} \cdot  \sqrt{k_i(x, u, x, u)}.
\end{align}
Here, $B_{f,g}$ is the upper bound of the RKHS norm satisfying $\|f+gu \|_{\mathcal{H}_k} \leq B_{f,g}$.
\end{lemma}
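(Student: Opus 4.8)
The plan is to split the argument into two independent pieces: first establishing that the target function $f+gu$ lives in the RKHS $\mathcal{H}_k$ of the composite kernel with a norm controlled by $B_f$ and $B_g$ (this is where Assumption~\ref{assump.bounded RKHS} enters), and then invoking the generic worst-case error bound for the computation-aware posterior (Corollary 1 of \cite{wenger2022posterior}), which converts membership into the pointwise estimate \eqref{eq.pointwise convergence}.

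For the membership claim I would exploit the additive structure $k((x,u),(x',u')) = k^f(x,x') + u\,k^g(x,x')\,u'$. The RKHS of a sum of two kernels is the sum space of the individual RKHSs, with norm given by the infimal decomposition $\|h\|_{\mathcal{H}_k}^2 = \min_{h=h_1+h_2}(\|h_1\|^2 + \|h_2\|^2)$. Viewed as a kernel on the joint input $(x,u)$ that ignores $u$, the summand $k^f$ has RKHS isometric to $\mathcal{H}_{k^f}$, so $f(x)$ sits in it with norm $\|f\|_{\mathcal{H}_{k^f}}\le B_f$. The second summand is a feature-weighted kernel with scalar multiplier $\phi(x,u)=u$; by the standard identity for a kernel multiplied by a fixed feature, its RKHS consists of functions $u\,\tilde g(x)$ with $\tilde g\in\mathcal{H}_{k^g}$ and preserved norm, so $g(x)u$ belongs to it with norm $\|g\|_{\mathcal{H}_{k^g}}\le B_g$. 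Using this particular decomposition yields $f+gu\in\mathcal{H}_k$ and $\|f+gu\|_{\mathcal{H}_k}\le\sqrt{B_f^2+B_g^2}=:B_{f,g}$.

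For the error bound, write $z=(x,u)$ and take a zero prior mean without loss of generality. The reproducing property lets me express the posterior mean as $\mu_i(z)=\langle f+gu,\, k(z,Z)C_i k_Z\rangle_{\mathcal{H}_k}$, where $k_Z$ stacks the kernel sections at the data, so the error becomes $f(x)+g(x)u-\mu_i(z)=\langle f+gu,\,\psi_z\rangle_{\mathcal{H}_k}$ with representer $\psi_z = k(z,\cdot)-k(z,Z)C_i k_Z$. Cauchy--Schwarz bounds the error by $\|f+gu\|_{\mathcal{H}_k}\,\|\psi_z\|_{\mathcal{H}_k}$. Expanding $\|\psi_z\|_{\mathcal{H}_k}^2$ with Gram matrix $K$ reduces it to $k(z,z)-2k(z,Z)C_i k(Z,z)+k(z,Z)C_i K C_i k(Z,z)$, and the algebraic identity $C_i K C_i = C_i$ — immediate from $C_i = S_i(S_i^\top K S_i)^{-1}S_i^\top$ — collapses this to exactly the combined variance $k_i(z,z)$. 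Combining with $B_{f,g}$ delivers \eqref{eq.pointwise convergence}.

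The main obstacle I anticipate is the membership step rather than the error bound: verifying that $gu$ genuinely lies in $\mathcal{H}_k$ requires the feature-weighted kernel identity and care that the additive decomposition is applied on the joint domain $\mathcal{X}\times\mathcal{U}$, treating $u$ as an input coordinate rather than a parameter. Once membership with an explicit $B_{f,g}$ is secured, the remainder is a routine reproducing-property and Cauchy--Schwarz computation whose only nontrivial ingredient, the idempotence-type relation $C_i K C_i = C_i$, is built into the definition of $C_i$.
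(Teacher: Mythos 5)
Your proof is correct, and while it follows the paper's two-step skeleton---first membership $f+gu\in\mathcal{H}_k$, then the worst-case error bound---both steps are executed by genuinely different means. For membership, the paper expands $f$ and $g$ in kernel sections and converts them into sections of the composite kernel through the evaluations $k^f(x,x')=k(x,u,x',0)$ and $u\,k^g(x,x')=k(x,u,x',1)-k(x,u,x',0)$, concluding only that \emph{some} bound $B_{f,g}>0$ exists; you instead invoke the sum-of-kernels theorem on the joint domain together with the feature-multiplication identity for $u\,k^g(x,x')\,u'$. Your route is more structural and buys an explicit constant $B_{f,g}=\sqrt{B_f^2+B_g^2}$, which the paper's existence argument does not supply, even though a numerical value of $B_{f,g}$ is precisely what the stability condition \eqref{eq.sufficient condition} and the SOCP constraint \eqref{eq.clf sufficient condition} require in practice. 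For the pointwise bound, the paper simply cites Corollary 1 of \cite{wenger2022posterior}, whereas you re-derive it: writing $z=(x,u)$, the representer $\psi_z=k(z,\cdot)-k(z,Z)C_ik_Z$, Cauchy--Schwarz, and the idempotence $C_iKC_i=C_i$ (immediate from $C_i=S_i(S_i^\top KS_i)^{-1}S_i^\top$) collapse $\|\psi_z\|_{\mathcal{H}_k}^2$ to exactly the combined variance $k_i(z,z)$, making the lemma self-contained rather than reliant on an external result. One caveat you share with the paper: with the nonzero prior mean $\mu=\hat f+\hat g u$, the error $f(x)+g(x)u-\mu_i(x,u)$ equals the zero-mean expression applied to $f+gu-\mu$, so the bound rigorously controls it by $\|f+gu-\mu\|_{\mathcal{H}_k}\sqrt{k_i(x,u,x,u)}$; your ``WLOG zero prior mean'' is therefore exactly as (im)precise as the lemma itself, which also quietly identifies $B_{f,g}$ with a bound on $\|f+gu\|_{\mathcal{H}_k}$ rather than on the deviation from the nominal model.
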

The detailed proof of Lemma~\ref{lemma.worst case error} can be found in Appendix \ref{appendix.proof of lemma 1}. Lemma~\ref{lemma.worst case error} shows that, in the sense of point-wise convergence, the combined uncertainty $k_i$ is the correct object characterizing the belief about the latent function $f+gu$ under constrained computation. 

\begin{remark}\label{remark.computation complexity}
There is an intrinsic conflict between uncertainty quantification and computational efficiency, as quantifying computational uncertainty requires additional costs. Typically, computation-aware GP model learning exhibits a quadratic time complexity of \(O(N^2i)\) for \(i\) iterations, which incurs a greater cost compared to linear-time GP approximations such as inducing point methods \cite{hensman2013gaussian,titsias2009variational}. However, in the subsequent sections, we will demonstrate that these additional costs are justifiable, as they are indispensable for providing stability guarantees in computation-constrained systems.
\end{remark}

\section{Computation-Aware Stability Analysis}\label{sec.stability analysis}

In Section~\ref{sec.model learning}, we establish the error bound in Lemma \ref{lemma.worst case error} to link constrained computation with regression performance through computational uncertainty. However, this bound does not directly address the actual objective of control systems, which is achieving stable control. This and the following sections will bridge this gap. In this section, we will first address the question:
\begin{mdframed}
\textbf{Q2:} How to verify the stability of the system using the computation-aware GP model?
\end{mdframed}
The core idea is to examine the Lyapunov function derivative along the trajectories of the system incorporating the computation-aware GP model. First, we make a common assumption about the equilibrium point of the system \cite{khalil2002nonlinear}:
\begin{assumption}[Equilibrium Point]\label{assump.equilibrium point}
The origin is an equilibrium point of \eqref{eq.system model} with $f(0)=g(0)=0$. 
\end{assumption}
Stability is crucial for robot control. The controller of a robotic system, denoted as \(\pi: \mathcal{X} \rightarrow \mathcal{U}\), must stabilize around a reference point or follow a reference trajectory. Specifically, asymptotic stability implies that if the system starts `close enough' to the equilibrium point, it remains `close enough' indefinitely and eventually converges to it:
\begin{definition}[Asymptotic Stability]
System \eqref{eq.system model} is considered asymptotically stable if, for every $\epsilon > 0$, there exists a $\delta > 0$ such that $\|x(0)\| \leq \delta$ implies $\| x(t) \| < \epsilon$ and $\lim_{t \to \infty} \|x(t)\| = 0$ for every $t \geq 0$.
\end{definition}
A common way to verify stability is to use a well-defined Lyapunov function, which can be established through natural energy functions \cite{Khalil:1173048} or experimentation \cite{giesl2015review}:
\begin{assumption}[Well-defined Lyapunov Function]\label{assump.Lyapunov function}
A Lyapunov function $V(x)$ is fixed and two-times continuously differentiable.
\end{assumption}
In this paper, we assume that such a function is known in advance. For asymptotic stability, for every $x \in \mathbb{R}^n \setminus {0}$, the derivative of the Lyapunov function is required to be less than $0$ for the closed-loop system:
\begin{equation}\label{eq.asymptotic stability}
\dot{V}(x) = \frac{\partial{V}}{\partial{x}} \left(f(x) + g(x) \pi(x) \right) < 0.    
\end{equation}
However, $\dot{V}(x) < 0$ is too stringent to achieve for all $x \neq 0$. In such cases, the stability of the system is typically assessed by the size of the region where $\dot{V}(x) < 0$. This region is proven to be an invariant set, which is specifically referred to as the region of attraction (ROA) \cite{khalil2002nonlinear}:
\begin{definition}[Region of Attraction]
A set $\mathcal{R}$ is defined as a ROA of the system if, for all $x(0) \in \mathcal{R}$, it holds that $\lim_{t \to \infty} x(t) = 0$.
\end{definition}
The concept of ROA is not only theoretically significant but also practically useful for robotic systems. It provides an essential verification of the controller's safety before its real-world deployment. For example, in drone navigation, the ROA would specify the conditions under which the drone can be expected to stabilize to a desired flight path or hover point. Finding the exact ROA analytically might be difficult or even impossible \cite{khalil2002nonlinear}. Nevertheless, we can instead use a level set of Lyapunov functions to conservatively estimate it, as shown in the following lemma:  
\begin{lemma}[Level Set Estimates of ROA  \cite{khalil2002nonlinear}]\label{lemma.ROA}
Consider a level set of the Lyapunov function $\mathcal{V}(c) = \left\{x \in \mathcal{X} |V (x) \leq c\right\}$ with $c>0$. If for every $x \in \mathcal{V}(c) \setminus \{0\}$,  condition \eqref{eq.asymptotic stability} is satisfied, then $\mathcal{V}(c)$ is an invariant set and $\mathcal{V}(c) \subseteq \mathcal{R}$.
\end{lemma}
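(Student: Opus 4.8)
The plan is to run the classical direct Lyapunov argument in two stages: first establishing positive invariance of $\mathcal{V}(c)$, then upgrading invariance to attraction of the origin. Throughout I would use that $V$ is twice continuously differentiable (Assumption~\ref{assump.Lyapunov function}), so that $\dot{V}(x) = \frac{\partial V}{\partial x}(f(x)+g(x)\pi(x))$ is continuous, together with the fact that the origin is the unique equilibrium with $\dot{V}(0)=0$ (Assumption~\ref{assump.equilibrium point}).

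For invariance, I would fix any trajectory $x(t)$ with $x(0) \in \mathcal{V}(c)$ and observe that $\frac{d}{dt}V(x(t)) = \dot{V}(x(t)) \leq 0$ on $\mathcal{V}(c)$, with strict inequality off the origin by hypothesis \eqref{eq.asymptotic stability}. Hence $t \mapsto V(x(t))$ is nonincreasing, so $V(x(t)) \leq V(x(0)) \leq c$ for every $t \geq 0$ on which the solution exists. This confines the trajectory to $\mathcal{V}(c)$, yielding positive invariance and, since the relevant sublevel set is closed and bounded, also forward completeness of the solution.

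For attraction, since $V(x(t))$ is nonincreasing and bounded below by $0$, it converges to some limit $V^\star \geq 0$; I would argue $V^\star = 0$, which forces $x(t) \to 0$ by positive definiteness of $V$. Supposing instead $V^\star > 0$, the trajectory would be trapped for all time in the annular region $\{x \in \mathcal{V}(c) : V^\star \leq V(x) \leq c\}$, which is compact and excludes the origin. On this set $\dot{V}$ is continuous and strictly negative, hence attains a maximum $-\gamma < 0$; integrating $\dot{V}(x(t)) \leq -\gamma$ gives $V(x(t)) \leq V(x(0)) - \gamma t$, which eventually drives $V$ below $V^\star$, contradicting $V(x(t)) \geq V^\star$. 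Therefore $V^\star = 0$, every trajectory from $\mathcal{V}(c)$ converges to the origin, and $\mathcal{V}(c) \subseteq \mathcal{R}$.

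The main obstacle is not the monotonicity bookkeeping but guaranteeing the \emph{compactness} of $\mathcal{V}(c)$ on which the contradiction in the final step relies: for general $c>0$ the sublevel set need not be bounded, so I would either restrict to the connected component of $\mathcal{V}(c)$ containing the origin, or invoke positive definiteness (and, if needed, radial unboundedness) of $V$ to ensure the relevant sublevel set is compact. With compactness in hand, continuity of $\dot{V}$ supplies the uniform strict decrease $-\gamma$, which is the quantitative ingredient that converts the pointwise condition $\dot{V}<0$ into genuine convergence.
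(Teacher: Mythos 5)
The paper gives no proof of this lemma at all---it is quoted directly from Khalil \cite{khalil2002nonlinear}---and your argument is exactly the standard two-stage proof found in that reference: positive invariance from monotonicity of $V$ along trajectories, then attraction by showing the limit $V^\star$ must be zero, using the uniform decrease rate $-\gamma$ that continuity of $\dot{V}$ supplies on the compact annulus $\{x : V^\star \leq V(x) \leq c\}$. Your closing caveat is also correct and worth noting: the lemma as stated in the paper omits both boundedness of $\mathcal{V}(c)$ and positive definiteness of $V$ (Assumption~\ref{assump.Lyapunov function} only requires $V$ to be twice continuously differentiable), hypotheses that Khalil's version carries and that your contradiction step genuinely needs, so restricting to a compact connected component of the sublevel set containing the origin repairs a gap in the paper's statement rather than in your proof.
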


Lemma \ref{lemma.ROA} shows under which condition the level set of a Lyapunov function $\mathcal{V}(c)$ can be verified as a subset of ROA. Consequently, the largest subset is the one most closely approximating the ROA and is often used as the ROA estimate. The goal of this section is to use the computation-aware GP model to find the largest $c$ that satisfies the condition $\dot{V}(x) < 0$ for every $x \in \mathcal{V}(c) \setminus \{0\}$. 

Using Lemma \ref{lemma.worst case error}, for a given controller $u=\pi(x)$, we have
\begin{equation} \nonumber
\begin{aligned}
&\vert f(x) + g(x)\pi(x) - \mu_i(x,\pi(x)) \vert 
\\
\leq& 
B_{f,g} \cdot  \left( \sigma^{\textup{math}}(x, \pi(x)) + \sigma^{\textup{comp}}_i(x,\pi(x)) \right),      
\end{aligned}
\end{equation}
where $B_{f,g}$ is the upper bound of the RKHS norm satisfying $\|f+gu \|_{\mathcal{H}_k} \leq B_{f,g}$;
$\sigma^{\textup{math}}(x,u) \triangleq \sqrt{k^{\textup{math}}(x,u,x,u)}$ and $\sigma_i^{\textup{comp}}(x,u) \triangleq \sqrt{k_i^{\textup{comp}}(x,u,x,u)}$ are the standard deviation of the GP posterior at $(x,u)$.
Thus, the derivative of the Lyapunov function $\dot{V}(x)$ can be bounded by:
\begin{equation} \nonumber
\begin{aligned}
\dot{V}(x) 
=& \frac{\partial V(x)}{\partial x} \left( f(x)+g(x)\pi(x) \right)\\
=& \frac{\partial V(x)}{\partial x} \left( \mu_i(x,\pi(x)) + f(x)+g(x)\pi(x) - \mu_i(x,\pi(x)) 
 \right)
\\
\leq& \frac{\partial V(x)}{\partial x} \mu_i(x,\pi(x)) \\
&+ \left \vert \frac{\partial V(x)}{\partial x}  
\right \vert \left \vert f(x) + g(x)\pi(x) - \mu_i(x,\pi(x)) \right \vert
\\
\leq& \underbrace{\frac{\partial V(x)}{\partial x} \mu_i(x,\pi(x))}_{\triangleq \dot{\bar{V}}_i(x)} + 
\underbrace{  \left \vert \frac{\partial V(x)}{\partial x} \right \vert \cdot B_{f,g} \cdot \sigma^{\textup{math}}(x, \pi(x))}_{\triangleq \dot{V}^{\textup{math}}(x)} 
\\
&+ \underbrace{\left \vert \frac{\partial V(x)}{\partial x} \right \vert \cdot B_{f,g} \cdot   \sigma^{\textup{comp}}_i(x,\pi(x))}_{\dot{V}_i^{\textup{comp}}(x)}.
\end{aligned}
\end{equation}

From the derivation above, we find that $\dot{V}(x)$ can be upper bounded by the sum of $\dot{\bar{V}}_i(x)$, $\dot{V}_i^{\textup{comp}}(x)$, and $\dot{V}^{\textup{math}}(x)$. Here, $\dot{\bar{V}}_i(x)$ can be considered the mean part of the Lyapunov derivative, while $\dot{V}^{\textup{math}}(x)$ relates to mathematical uncertainty, and $\dot{V}_i^{\textup{comp}}(x)$ pertains to computational uncertainty. 
This decomposition of
$\dot{V}(x)$ allows us to estimate the ROA using the computation-aware GP model:
\begin{theorem}[ROA Estimation using Computation-Aware GP Model]\label{theorem.ROA}
If, for all $x \in \mathcal{V}(c) \setminus \{0\}$, the following inequality is satisfied:
\begin{equation}\label{eq.sufficient condition}
\dot{\hat{V}}(x) \triangleq \dot{\bar{V}}_i(x) + \dot{V}_i^{\textup{comp}}(x) + \dot{V}^{\textup{math}}(x) < 0,
\end{equation}
then it follows that $\mathcal{V}(c) \subseteq \mathcal{R}$.
\end{theorem}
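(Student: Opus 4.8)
The plan is to prove the theorem by transitivity: chain the pointwise error bound derived immediately before the statement with the classical level-set criterion of Lemma~\ref{lemma.ROA}. The whole argument hinges on the inequality $\dot{V}(x) \leq \dot{\hat{V}}(x)$, which holds at every $x$ for the fixed controller $u = \pi(x)$; once the hypothesis forces the right-hand side to be strictly negative on $\mathcal{V}(c)\setminus\{0\}$, the true Lyapunov derivative inherits the same sign, and the level-set lemma applies verbatim. In other words, $\dot{\hat{V}}$ is introduced precisely as a computable surrogate for the unknown $\dot{V}$, and the theorem just certifies that controlling the surrogate suffices.

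First I would re-establish the chain $\dot{V}(x) \leq \dot{\hat{V}}(x)$ cleanly. Writing $\dot{V}(x) = \frac{\partial V}{\partial x}\mu_i(x,\pi(x)) + \frac{\partial V}{\partial x}\bigl(f(x)+g(x)\pi(x) - \mu_i(x,\pi(x))\bigr)$, I bound the second term by the triangle inequality and then by Lemma~\ref{lemma.worst case error}, which gives $\lvert f(x)+g(x)\pi(x) - \mu_i(x,\pi(x))\rvert \leq B_{f,g}\sqrt{k_i(x,\pi(x),x,\pi(x))}$. The one genuine algebraic step is subadditivity of the square root applied to the decomposition $k_i = k^{\textup{math}} + k_i^{\textup{comp}}$, namely $\sqrt{k^{\textup{math}} + k_i^{\textup{comp}}} \leq \sigma^{\textup{math}} + \sigma_i^{\textup{comp}}$; multiplying by $\lvert \partial V/\partial x\rvert$ then produces exactly $\dot{V}^{\textup{math}}(x) + \dot{V}_i^{\textup{comp}}(x)$, so that $\dot{\hat{V}}(x) = \dot{\bar{V}}_i(x) + \dot{V}_i^{\textup{comp}}(x) + \dot{V}^{\textup{math}}(x)$ is a valid, model-computable upper bound for $\dot{V}(x)$.

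Next I would invoke the hypothesis directly. By assumption $\dot{\hat{V}}(x) < 0$ for every $x \in \mathcal{V}(c)\setminus\{0\}$, so transitivity $\dot{V}(x) \leq \dot{\hat{V}}(x) < 0$ yields $\dot{V}(x) < 0$ on $\mathcal{V}(c)\setminus\{0\}$. This is exactly the premise of Lemma~\ref{lemma.ROA}, whose conclusion states that $\mathcal{V}(c)$ is then an invariant set with $\mathcal{V}(c) \subseteq \mathcal{R}$, completing the proof.

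The proof is short because the analytic effort is front-loaded into Lemma~\ref{lemma.worst case error} and the preceding derivation; the main thing to verify is soundness rather than any hard estimate. The subtle point I would check is that the square-root subadditivity is always valid (it holds for arbitrary nonnegative summands, so no positivity caveats arise) and that the underlying bound is genuinely deterministic — Lemma~\ref{lemma.worst case error} is a worst-case RKHS bound, not a high-probability one — so no probabilistic qualifier needs to be carried into the final ROA guarantee. Were that bound only probabilistic, the containment $\mathcal{V}(c) \subseteq \mathcal{R}$ would hold only with the corresponding confidence, and that is the only place where the argument could become delicate.
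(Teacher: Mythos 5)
Your proposal is correct and follows essentially the same route as the paper: the paper's own proof simply observes that condition \eqref{eq.sufficient condition} is a sufficient condition for \eqref{eq.asymptotic stability} (via the $\dot{V}(x) \leq \dot{\hat{V}}(x)$ chain derived just before the theorem, which rests on Lemma~\ref{lemma.worst case error} and the subadditivity $\sqrt{k^{\textup{math}} + k_i^{\textup{comp}}} \leq \sigma^{\textup{math}} + \sigma_i^{\textup{comp}}$) and then invokes Lemma~\ref{lemma.ROA}. Your write-up merely makes those implicit steps explicit, and your closing observation that the bound is deterministic rather than probabilistic is a correct and worthwhile check.
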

\begin{proof}
Note that \eqref{eq.sufficient condition} represents a sufficient condition for \eqref{eq.asymptotic stability}. Therefore, this theorem can be directly derived from Lemma \ref{lemma.ROA}.
\end{proof}
This theorem offers a sufficient condition for estimating the ROA using a computation-aware GP model. The decomposition of $\dot{V}(x)$ indicates that both mathematical and computational uncertainties are indispensable in estimating the ROA. With a fixed amount of collected data, the mathematical uncertainty part, $\dot{V}^{\textup{math}}(x)$, remains constant, while the computational uncertainty part, $\dot{V}^{\textup{comp}}_i(x)$, typically decreases as more computations are performed \cite{wenger2022posterior, hennig2022probabilistic}.
\begin{remark}
Previous ROA estimation methods that use GP dynamic model learning, such as those described in \cite{berkenkamp2016safe}, do not account for computational uncertainty. This oversight can lead to estimated ROAs that lack stability guarantees, meaning the estimated level sets of the Lyapunov function cannot assure they are subsets of the ROA under constrained computation. For instance, it has been observed that the ROAs estimated using the method in \cite{berkenkamp2016safe} exceed the actual ROAs in scenarios with computational constraints, as demonstrated in Fig.~\ref{fig.2D comparision}.
\end{remark}

In practice, it is often impossible to evaluate $\dot{V}(x)$ everywhere in a continuous domain. Nonetheless, the continuity of $\dot{V}$ allows for evaluating it only at a finite number of points without losing guarantees, as shown in the subsequent corollary:

\begin{corollary}[ROA Estimation Through Discretization]\label{corollary.discretization}
Assume that the given policy $\pi$ is bounded in $\mathcal{X}$ satisfying $\|\pi\|_{\infty} \leq B_{\pi}$. If, for every $x \in \left\{ \mathcal{X}_{\tau} \cap \left\{\mathcal{V}(c) \setminus {0}\right\} \right\}$, the following inequality is satisfied:
\begin{equation}\label{eq.sufficient condition discretization}
\dot{\bar{V}}_i(x) + \dot{V}^{\textup{math}}(x) + \dot{V}_i^{\textup{comp}}(x) < -L\tau,  
\end{equation}
where $L$ is the Lipschitz constant of $\dot{V}(x)$ defined by
\begin{equation}\nonumber
\begin{aligned}
L =& \left(B_f^2 \|k^f\|_{\infty} + B_g^2 B_{\pi} \|k^g\|_{\infty}  \right) \cdot \left\|\frac{\partial^2 E(x)}{\partial x^2}\right\|_{\infty}
\\
+& \sqrt{2}\left\|\frac{\partial E(x)}{\partial x}\right\|_{\infty}\cdot \left(  B_f \sqrt{ \|k^f\|_{\infty}\left\|\frac{\partial k^f}{\partial x}\right\|_{\infty}} \right)
\\
+& \sqrt{2}\left\|\frac{\partial E(x)}{\partial x}\right\|_{\infty}\cdot \left(  B_g \pi_g  \sqrt{ \|k^g \|_{\infty}\left\|\frac{\partial k^g}{\partial x}\right\|_{\infty}} \right).     
\end{aligned}   
\end{equation}
It then follows that $\mathcal{V}(c) \subseteq \mathcal{R}$.
Here, $\mathcal{X}_{\tau} \in \mathcal{X}$ is a discretization of $\mathcal{X}$ satisfying $|x - [x]_{\tau}| \leq \frac{\tau}{2}$ for all $x \in \mathcal{X}$; $[x]_{\tau}$ defines the nearest point in $\mathcal{X}_{\tau}$ to $x \in \mathcal{X}$. 
\end{corollary}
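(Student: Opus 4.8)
The plan is to derive Corollary~\ref{corollary.discretization} from Theorem~\ref{theorem.ROA} by promoting the finitely many inequalities verified on the grid $\mathcal{X}_\tau$ to the continuum requirement $\dot{V}(x)<0$ on all of $\mathcal{V}(c)\setminus\{0\}$; once that is in hand, Lemma~\ref{lemma.ROA} yields $\mathcal{V}(c)\subseteq\mathcal{R}$ directly. The engine is a standard Lipschitz covering argument: because $\mathcal{X}_\tau$ is $\tfrac{\tau}{2}$-dense (every $x$ has a grid neighbour $[x]_\tau$ with $\lvert x-[x]_\tau\rvert\le\tau/2$), an $L$-Lipschitz bound on $\dot{V}$ lets me transport the safety margin $-L\tau$ checked at $[x]_\tau$ to strict negativity at the arbitrary interior point $x$.

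Concretely, I would first recall the pointwise inequality $\dot{V}(x)\le\dot{\hat{V}}(x)=\dot{\bar{V}}_i(x)+\dot{V}^{\textup{math}}(x)+\dot{V}_i^{\textup{comp}}(x)$ established just before Theorem~\ref{theorem.ROA}, which holds for every $x$ by Lemma~\ref{lemma.worst case error}. Fixing $x\in\mathcal{V}(c)\setminus\{0\}$ with nearest grid point $[x]_\tau$, I would then chain
\begin{align*}
\dot{V}(x) &\le \dot{V}([x]_\tau)+L\,\lvert x-[x]_\tau\rvert \le \dot{V}([x]_\tau)+\tfrac{L\tau}{2}\\
&\le \dot{\hat{V}}([x]_\tau)+\tfrac{L\tau}{2} < -L\tau+\tfrac{L\tau}{2} = -\tfrac{L\tau}{2} < 0,
\end{align*}
where the first inequality uses the $L$-Lipschitz continuity of $\dot{V}$, the second uses $\lvert x-[x]_\tau\rvert\le\tau/2$, the third uses the pointwise bound, and the fourth uses hypothesis~\eqref{eq.sufficient condition discretization}. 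This establishes $\dot{V}(x)<0$ throughout $\mathcal{V}(c)\setminus\{0\}$, which is exactly the premise of Lemma~\ref{lemma.ROA}.

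The substantive work is to justify that the displayed $L$ is genuinely a Lipschitz constant of $\dot{V}(x)=\frac{\partial V}{\partial x}\bigl(f(x)+g(x)\pi(x)\bigr)$ (with $E$ read as the Lyapunov function $V$). I would view $\dot{V}$ as a product and apply the product rule for Lipschitz constants, so that its constant is dominated by $\lVert\partial^2 V/\partial x^2\rVert_\infty\,\lVert f+g\pi\rVert_\infty$ plus $\lVert\partial V/\partial x\rVert_\infty$ times the Lipschitz constants of $f$ and $g\pi$; the two $V$-factors are finite by Assumption~\ref{assump.Lyapunov function}. The remaining inputs come from the RKHS structure: the reproducing property together with Cauchy--Schwarz bounds the suprema of $f$, $g$ and of their spatial derivatives by the norms $B_f,B_g$ times kernel-dependent quantities such as $\lVert k^f\rVert_\infty$ and $\lVert\partial k^f/\partial x\rVert_\infty$. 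Feeding these into the product-rule estimate, with $B_\pi$ bounding the control, reproduces the stated $L$: the first group of terms arises from the $\lVert\partial^2 V/\partial x^2\rVert_\infty\,\lVert f+g\pi\rVert_\infty$ contribution, and the $\sqrt{2}$ terms from $\lVert\partial V/\partial x\rVert_\infty$ multiplied by the Lipschitz constants of $f$ and $g\pi$.

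I expect the obstacles to be twofold. The genuinely technical step is the kernel-derivative Lipschitz bound on $f$ and $g$: one must differentiate the reproducing identity $f(x)=\langle f,k^f(x,\cdot)\rangle_{\mathcal{H}_f}$ and control $\lVert k^f(x,\cdot)-k^f(x',\cdot)\rVert_{\mathcal{H}_f}$ uniformly through $\partial k^f/\partial x$, which is where the bounded, continuously differentiable kernels of Assumption~\ref{assump.bounded RKHS} are indispensable and where the constants (and the precise powers of $B_f,B_g$) must be tracked carefully. The second, more subtle point is the level-set boundary: the nearest grid point $[x]_\tau$ of an interior $x$ need not itself lie in $\mathcal{V}(c)$, so strictly speaking hypothesis~\eqref{eq.sufficient condition discretization} may not apply at $[x]_\tau$. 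I would resolve this by restricting the covering to grid points inside $\mathcal{V}(c)$ and using the continuity of $V$ (together with the slack in the $-L\tau$ margin) to absorb the thin boundary layer, so that the conclusion $\mathcal{V}(c)\subseteq\mathcal{R}$ is preserved.
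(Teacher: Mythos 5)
Your proposal is correct and takes essentially the same route as the paper: the appendix proof derives the very same Lipschitz constant $L$ for $\dot{V}$ via the identical product-rule split (the curvature term $\|\partial^2 V/\partial x^2\|_\infty$ times the sup-norm bound on $f+g\pi$, plus $\|\partial V/\partial x\|_\infty$ times the RKHS-induced Lipschitz constants of $f$ and $g\pi$), and then uses Lipschitz continuity to promote the grid condition \eqref{eq.sufficient condition discretization} to the continuum condition \eqref{eq.sufficient condition}, concluding by Theorem~\ref{theorem.ROA}. If anything, your explicit covering chain and your handling of points whose nearest grid neighbour may fall outside $\mathcal{V}(c)$ are more careful than the paper's own proof, which leaves both of these steps implicit.
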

The detailed proof of Corollary~\ref{corollary.discretization} can be found in Appendix~\ref{appendix.proof of corollary 1}. The Lipschitz continuity of $\dot{V}(x)$ allows for the extension of results from $\mathcal{X}_{\tau}$ to the broader set $\mathcal{X}$. Additionally, the trade-off between accuracy and computational effort can be managed by adjusting the discretization factor $\tau$. A smaller value of $\tau$ leads to a less conservative estimate of the ROA. 

This corollary provides a way for estimating the ROA by means of binary search. For
high-dimensional systems, this approach may become computationally demanding due to
the exponential increase in the number of points to be checked with each added dimension.
Nevertheless, the primary purpose of quantifying the ROA is to evaluate the stability of a given
controller before deployment. Therefore, it is typically performed offline, and the required
time could be manageable for offline computation.    

\section{Computation-Aware Controller Design}\label{sec.controller design}
Section \ref{sec.stability analysis} introduces the method to verify the stability of a given controller. 
If a given controller does not meet the desired ROA, it needs to be redesigned using the computation-aware GP model.
This section answers the following question: 
\begin{mdframed}
\textbf{Q3}
How can a stable controller be synthesized for a computation-constrained system? 
\end{mdframed}
The core idea involves formulating an optimization problem that minimizes the control input norm, subject to the inequality constraints $\dot{V} < 0$, by employing a computation-aware GP model. 
We begin with the definition of the control Lyapunov function (CLF):
\begin{definition}[Control Lyapunov Function]
The function $V(x)$ is a control Lyapunov function for the system \eqref{eq.system model} if, for each $x \in \mathbb{R}^n \setminus{0}$, it holds that
\begin{equation} \nonumber
\label{eq.CLF}
\inf _{u \in \mathbb{R}^m} \underbrace{\frac{\partial V(x)}{\partial x}f(x) +\frac{\partial V(x)}{\partial x}g(x)u}_{\triangleq\dot{V}(x, u)}<0. 
\end{equation}
\end{definition}
The existence of such a CLF indicates that the system is globally stabilizable if $V$ is radially unbounded, i.e., $V(x) \to \infty$ as $|x| \to \infty$ \cite{artstein1983stabilization}. The objective then becomes to find a local Lipschitz continuous feedback control law $u = \pi(x)$, ensuring that the condition $\dot{V}(x, \pi(x)) < 0$ is met for all $x \in \mathbb{R}^n \setminus{0}$. For unconstrained inputs $u \in \mathbb{R}^m$, such a feedback control law can be readily derived in a closed form \cite{sontag1989universal}. However, robotic systems typically face actuator limitations that manifest as control constraints 
 $u \in \mathcal{U}$. Here, the set $\mathcal{U} \subseteq \mathbb{R}^m$ represents the compact set of all admissible control inputs. In this case, there is no exact closed-form to construct a stable control policy and it may be unfeasible to find a control input ensuring $\dot{V}(x,u) < 0$ for every $x \in \mathbb{R}^n \setminus{0}$, even if $V$ is a valid CLF for the system.
In such cases, it is possible to identify a level set of the Lyapunov function $\mathcal{V}(c) = \left\{x \in \mathcal{X} |V (x) \leq c\right\}$, where
\begin{equation} \nonumber
\inf _{u \in \mathcal{U}} \frac{\partial V(x)}{\partial x}f(x) +\frac{\partial V(x)}{\partial x}g(x)u<0,
\end{equation}
is valid for all $x\in \mathcal{V}(c) \setminus \{0\}$. 
To facilitate the convexification of the optimization problem, and also impose a stronger notion of stabilizability for exponential convergence \cite{freeman1996control,castaneda2021gaussian,cosner2023robust}, we can instead consider:
\begin{equation}\label{eq.CLF convex}
\inf _{u \in \mathcal{U}} \frac{\partial V(x)}{\partial x}f(x) +\frac{\partial V(x)}{\partial x}g(x)u + \lambda \cdot V(x) \leq 0,    
\end{equation}
with $\lambda>0$.
A straightforward method to synthesize such a control law is by enforcing \eqref{eq.CLF convex} as a constraint in a min-norm optimization problem:

\HRule
\noindent \textbf{CLF-QP}:
\begin{subequations}
\label{eq.clf-qp}
\begin{align}
\pi^*(x) & = & & \underset{u\in \mathcal{U}}{\argmin} \quad \|u \|^2 \label{eq.clf-qp1},\\
\text{s.t.} & \; & & \frac{\partial V(x)}{\partial x}f(x) +\frac{\partial V(x)}{\partial x}g(x)u + \lambda \cdot V(x) \leq 0.\label{eq.clf-qp2}
\end{align}
\end{subequations}
\HRule

In this paper, we assume that the input constraints are linear, which renders problem \eqref{eq.clf-qp} a quadratic program (QP). This optimization problem formulates a feedback control law, $\pi^*(x)$, that selects the min-norm input to ensure exponential convergence of the system state to the origin. However, this QP is not always feasible as the actual model inevitably contains uncertainties. Therefore, a principled solution involves reformulating the min-norm stabilizing controller, as defined in \eqref{eq.clf-qp}, to accommodate model uncertainty. Following \eqref{eq.sufficient condition} from Theorem \ref{theorem.ROA}, a sufficient condition for \eqref{eq.clf-qp2} is given by:
\begin{equation}\label{eq.clf sufficient condition}
\begin{aligned}
&\frac{\partial V(x)}{\partial x} \cdot \mu_i(x, u) + B_{f,g} \cdot  \left| \frac{\partial V(x)}{\partial x}\right| \cdot \sigma_i^{\textup{comp}}(x, u) 
\\
&+ B_{f,g} \cdot  \left| \frac{\partial V(x)}{\partial x}\right| \cdot \sigma^{\textup{math}}(x, u) 
 + \lambda V(x) \leq 0. 
\end{aligned}
\end{equation}
Therefore, we can replace the constraint in \eqref{eq.clf-qp2} with its sufficient condition \eqref{eq.clf sufficient condition}:

\HRule
\noindent \textbf{Computation-Aware GP-CLF-SOCP}:
\begin{equation}\label{eq.clf-SOCP}
\begin{aligned}
{\pi}^{*}(x) & = & & \underset{u \in \mathcal{U}}{\argmin} \quad \|u\|^2 
\\
\text{s.t.} & \; & & \eqref{eq.clf sufficient condition}.
\end{aligned}
\end{equation}
\HRule
The optimization problem \eqref{eq.clf-SOCP} is classified as a second-order cone program (SOCP). This arises from the inherent characteristics of the composite kernels. Specifically, \(\mu_i\) is a linear function of \(u\), while  \(\left(\sigma_i^{\textup{comp}}\right)^2\) and \(\left(\sigma^{\textup{math}}\right)^2\) are both positive definite quadratic functions of \(u\), as elucidated in \cite{castaneda2021gaussian}.

\begin{theorem}\label{theorem.clf}
If there exists a constant $c > 0$ such that the solution to the optimization problem \eqref{eq.clf-SOCP} exists for all $x \in \mathcal{V}(c) \setminus {0}$, then $\mathcal{V}(c)$ is a subset of ROA.
\end{theorem}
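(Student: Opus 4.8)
The plan is to show that Theorem~\ref{theorem.clf} follows almost immediately from Theorem~\ref{theorem.ROA}, with the min-norm controller $\pi^*$ supplying, at each state, an admissible input that certifies the decomposed Lyapunov condition \eqref{eq.sufficient condition}. First I would exploit the hypothesis that the SOCP \eqref{eq.clf-SOCP} is solvable for every $x \in \mathcal{V}(c) \setminus \{0\}$. Feasibility means the optimal input $u = \pi^*(x) \in \mathcal{U}$ satisfies the constraint \eqref{eq.clf sufficient condition} at that state. Substituting the definitions $\dot{\bar{V}}_i(x) = \frac{\partial V}{\partial x}\mu_i(x,\pi^*(x))$, $\dot{V}^{\textup{math}}(x) = B_{f,g}\,\bigl|\frac{\partial V}{\partial x}\bigr|\,\sigma^{\textup{math}}(x,\pi^*(x))$, and $\dot{V}_i^{\textup{comp}}(x) = B_{f,g}\,\bigl|\frac{\partial V}{\partial x}\bigr|\,\sigma_i^{\textup{comp}}(x,\pi^*(x))$ from Section~\ref{sec.stability analysis}, the constraint reads exactly $\dot{\hat{V}}(x) + \lambda V(x) \leq 0$.

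Next I would convert this inequality into the strict condition required by Theorem~\ref{theorem.ROA}. Since $V$ is a Lyapunov function (Assumption~\ref{assump.Lyapunov function}), $V(x) > 0$ for all $x \neq 0$, and $\lambda > 0$ by construction of \eqref{eq.CLF convex}. Hence $\dot{\hat{V}}(x) \leq -\lambda V(x) < 0$ for every $x \in \mathcal{V}(c) \setminus \{0\}$, which is precisely the sufficient condition \eqref{eq.sufficient condition}. Applying Theorem~\ref{theorem.ROA} with the feedback law $\pi^*$ then yields $\mathcal{V}(c) \subseteq \mathcal{R}$, noting that $\pi^*(x) \in \mathcal{U}$ is enforced directly by the SOCP so admissibility is automatic.

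The main obstacle I anticipate is not this algebraic chain but the well-posedness of the closed-loop system under the feedback $\pi^*$. Theorem~\ref{theorem.ROA} and the very notion of a region of attraction implicitly require the trajectories of the real closed-loop system $\dot{x} = f(x) + g(x)\pi^*(x)$ to be well-defined, which in turn demands that the state-feedback map $x \mapsto \pi^*(x)$ be at least continuous, and ideally locally Lipschitz, away from the origin. The pointwise min-norm solution of a parametric SOCP need not automatically inherit such regularity, so I would need to verify it separately. I would do so either by invoking smoothness of the solution map under a Slater-type constraint qualification implied by strict feasibility on the compact set $\mathcal{V}(c)$, or by appealing to the explicit Sontag-type universal formula extended to the combined uncertainty (as constructed elsewhere in the paper), which furnishes a locally Lipschitz stabilizing law obeying the same constraint.

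Once continuity of $\pi^*$ is secured, its boundedness on the compact level set $\mathcal{V}(c)$ follows, so that $\|\pi^*\|_\infty \leq B_\pi$ holds and the hypothesis of Corollary~\ref{corollary.discretization} is met whenever the stability check is carried out on the discretization $\mathcal{X}_\tau$ rather than pointwise. In that case the strict margin $\dot{\hat{V}}(x) \leq -\lambda V(x)$ must merely be strengthened to $-L\tau$ to absorb the discretization error, after which the same application of Theorem~\ref{theorem.ROA} (through Corollary~\ref{corollary.discretization}) delivers the conclusion $\mathcal{V}(c) \subseteq \mathcal{R}$.
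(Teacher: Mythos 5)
Your core argument---feasibility of the SOCP at each $x \in \mathcal{V}(c)\setminus\{0\}$ means the optimal input $\pi^*(x)\in\mathcal{U}$ satisfies \eqref{eq.clf sufficient condition}, hence $\dot{\hat{V}}(x) \leq -\lambda V(x) < 0$, and Theorem~\ref{theorem.ROA} then yields $\mathcal{V}(c) \subseteq \mathcal{R}$---is exactly the paper's own (one-sentence) proof, so the approach is essentially the same, just spelled out in more detail. Your additional concerns about continuity of the pointwise min-norm solution map $\pi^*$ (a well-posedness issue the paper silently ignores) and the closing paragraph on discretization via Corollary~\ref{corollary.discretization} (which is not needed for this theorem) go beyond the paper's proof but do not change the core argument.
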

\begin{proof}
The proof of ROA is straightforward: for all $x$ at the boundary of $\mathcal{V}(c)$, we can always find a control input $u$ such that $\dot{V}(x, u)<0$ ensured by \eqref{eq.clf sufficient condition}.
\end{proof}
Theorem \ref{theorem.clf} ensures the stability of the controller synthesized through solving \eqref{eq.clf-SOCP}. We want to emphasize that the constraints in the optimization problem often neglect computation uncertainty \cite{castaneda2021gaussian}, i.e., \(\sigma_i^{\textup{comp}}(x,u) \equiv 0\) is implicitly assumed. Under this assumption, \eqref{eq.clf sufficient condition} no longer serves as a sufficient condition for \eqref{eq.clf-qp2}. 
\begin{remark}\label{remark.constraints}
In practice, the constraint \eqref{eq.clf sufficient condition} can be relaxed by introducing a slack variable, ensuring the feasibility of the problem in cases where it is not locally satisfied. Additionally, instead of minimizing the norm of the control signals, we often minimize their weighted norm to normalize the impact of the magnitude of each control quantity. Consequently, the optimization problem in \eqref{eq.clf sufficient condition} can be reformulated as
\begin{equation}
\begin{aligned}
{\pi}^{*}(x) & = & & \underset{u \in \mathcal{U},\ d \in \mathbb{R}}{\argmin} \quad \|u\|^2_{W} + p \, d^2 \label{eq:gp-clf-socp} \\
\emph{\text{s.t.}} & \; & & \frac{\partial V(x)}{\partial x} \cdot \mu_i(x, u) + B_{f,g} \cdot  \left| \frac{\partial V(x)}{\partial x}\right| \cdot \sigma_i^{\textup{comp}}(x, u) 
\\
 & & & + B_{f,g} \cdot  \left| \frac{\partial V(x)}{\partial x}\right| \cdot \sigma^{\textup{math}}(x, u) 
 + \lambda V(x) \leq d
\nonumber.
\end{aligned}
\end{equation}
Here, $\Vert u \Vert_{W}^2 \triangleq u^{\top}Wu$.
\end{remark}
In a specific scenario where only the drift dynamics \( f(x) \) is unknown, and the control matrix \( g(x) \) is known, we need to learn only \( f(x) \). In this context, the posterior mean for learning \( f(x) \) is denoted as \( \mu_i(x) \), while the uncertainties are expressed as \( \sigma_i^{\textup{comp}}(x) \) and \( \sigma^{\textup{math}}(x) \), respectively. Under mild assumptions, it is possible to directly construct an explicit form of a stable controller for unconstrained inputs without the necessity of solving an optimization problem:
\begin{equation}\label{eq.explicit policy}
\begin{aligned}
\pi^{\mathrm{explicit}}(x) =
\displaystyle -\frac{a(x)+\sqrt{a^2(x)+\|b(x)\|^4}}{\|b(x)\|^2} b(x).
\end{aligned}
\end{equation}
Here, $a(x)$ is a scalar function defined as
\begin{equation}\label{eq.ax}
\begin{aligned}
a(x)\triangleq \frac{\partial{V(x)}}{\partial{x}}\mu_i(x)+ B_{f} \cdot  \left| \frac{\partial V(x)}{\partial x}\right| \cdot \left( \sigma_i^{\textup{comp}}(x) + \sigma^{\textup{math}}(x) \right),    
\end{aligned}
\end{equation}
and $b(x)$ is a vector function defined by
\begin{equation}\label{eq.bx}
b(x)\triangleq g^{\top}(x)\frac{\partial{V(x)}}{\partial{x^{\top}}}.    
\end{equation}
The stability of this explicit control policy is summarized in the subsequent Proposition:
\begin{proposition}
[Explict Form of Stable Control Policies]\label{prop.explicit form}
Assuming that the input matrix $g(x) \neq 0$ and $\frac{\partial{V(x)}}{\partial{x}} \neq 0$ for all $x$. Then, the explicit control policy in \eqref{eq.explicit policy} stabilizes the closed-loop system.
\end{proposition}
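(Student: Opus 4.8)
The plan is to recognize \eqref{eq.explicit policy} as a robustified instance of Sontag's universal formula and to verify directly that the \emph{true} Lyapunov derivative is strictly negative away from the origin under this feedback. Since in this scenario $g(x)$ is known and only $f$ is learned, I would first write the exact derivative of $V$ along trajectories of the real system as
\begin{equation}\nonumber
\dot V(x) = \frac{\partial V(x)}{\partial x} f(x) + b(x)^\top u,
\end{equation}
using the definition of $b(x)$ in \eqref{eq.bx}, so that the only unknown quantity is the drift term $\frac{\partial V}{\partial x} f(x)$.

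The key step is to show that the scalar $a(x)$ of \eqref{eq.ax} is a valid pointwise upper bound for this unknown drift. Specializing Lemma~\ref{lemma.worst case error} to the learning of $f$ alone gives $\vert f(x)-\mu_i(x)\vert \le B_f\big(\sigma^{\textup{math}}(x)+\sigma_i^{\textup{comp}}(x)\big)$, and combining this with the elementary inequality $\frac{\partial V}{\partial x}(f-\mu_i) \le \big\vert \frac{\partial V}{\partial x}\big\vert \cdot \vert f-\mu_i\vert$ yields $\frac{\partial V}{\partial x} f(x) \le a(x)$. Hence $\dot V(x) \le a(x) + b(x)^\top u$, i.e. the uncertainty-inflated drift $a(x)$ dominates the true Lie derivative of $V$ along $f$, and any $u$ that drives $a + b^\top u$ negative will a fortiori stabilize the real plant.

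Substituting the explicit policy then collapses the right-hand side. Because $b^\top b = \|b\|^2$, evaluating $b(x)^\top \pi^{\textup{explicit}}(x)$ from \eqref{eq.explicit policy} gives $b^\top u = -\big(a(x)+\sqrt{a^2(x)+\|b(x)\|^4}\big)$, so the Sontag cancellation $a + b^\top u = -\sqrt{a^2+\|b\|^4}$ holds and therefore
\begin{equation}\nonumber
\dot V(x) \le -\sqrt{a^2(x)+\|b(x)\|^4} \le -\|b(x)\|^2 < 0, \qquad x \neq 0.
\end{equation}
Strict negativity and well-posedness of the formula both hinge on $\|b(x)\|^2 > 0$ away from the origin, which I would obtain from the standing assumptions $g(x)\neq 0$ and $\frac{\partial V}{\partial x}\neq 0$: in the scalar setting of the paper these force $b(x)=g(x)\frac{\partial V}{\partial x}\neq 0$, ruling out division by zero in \eqref{eq.explicit policy}. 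With $V$ positive definite (Assumption~\ref{assump.Lyapunov function}) and the origin an equilibrium (Assumption~\ref{assump.equilibrium point}), the standard Lyapunov theorem then delivers asymptotic stability.

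I expect the algebraic cancellation to be routine; the real obstacle will be the \emph{regularity} of the closed loop needed to invoke Lyapunov's theorem rigorously. Away from the origin the policy is as smooth as $a$ and $b$ (hence locally Lipschitz, guaranteeing existence and uniqueness of solutions), but continuity of $\pi^{\textup{explicit}}$ \emph{at} the origin is not automatic: there both $a(x)$ and $b(x)$ vanish (since $g(0)=0$ and $\frac{\partial V}{\partial x}(0)=0$) while $\|b\|^2$ appears in the denominator. This is precisely the point where Sontag's construction requires a (robustified) small-control property. I would establish it by exploiting the quadratic-in-$u$ structure of $(\sigma^{\textup{math}})^2$ and $(\sigma_i^{\textup{comp}})^2$ together with the vanishing of $a$ and $b$ at the origin, showing that $\|\pi^{\textup{explicit}}(x)\| = \big(a(x)+\sqrt{a^2(x)+\|b(x)\|^4}\big)/\|b(x)\| \to 0$ as $x \to 0$, so that $\pi^{\textup{explicit}}$ extends continuously with $\pi^{\textup{explicit}}(0)=0$.
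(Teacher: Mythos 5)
Your core argument is exactly the paper's proof: bound the unknown drift by the uncertainty-inflated scalar $a(x)$, then apply the Sontag cancellation $b^{\top}(x)\pi^{\mathrm{explicit}}(x) = -\bigl(a(x)+\sqrt{a^2(x)+\|b(x)\|^4}\bigr)$ to conclude $\dot{V}(x) \le -\sqrt{a^2(x)+\|b(x)\|^4} < 0$ wherever $b(x)\neq 0$. The paper's proof records only this cancellation; you additionally make explicit the step that $a(x)$ dominates $\frac{\partial V(x)}{\partial x}f(x)$ via Lemma~\ref{lemma.worst case error}, which the paper leaves implicit (it was established en route to \eqref{eq.clf sufficient condition}), and you correctly observe that the hypotheses $g(x)\neq 0$, $\frac{\partial V(x)}{\partial x}\neq 0$ guarantee $b(x)\neq 0$ only in the scalar setting (in general $\frac{\partial V(x)}{\partial x^{\top}}$ could be orthogonal to the range of $g(x)$). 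Up to this point your proposal is correct and, if anything, more complete than the paper's.

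The gap is in your final step: the claim that the small-control property holds automatically, i.e. that $\|\pi^{\mathrm{explicit}}(x)\| = \bigl(a(x)+\sqrt{a^2(x)+\|b(x)\|^4}\bigr)/\|b(x)\| \to 0$ as $x \to 0$, would fail in general, and this is precisely where this formula differs from Sontag's. In Sontag's setting $a(x)$ is the true Lie derivative $\frac{\partial V(x)}{\partial x}f(x)$, which vanishes at the rate dictated by $f(0)=0$. Here $a(x)$ contains the term $B_f\,\bigl|\frac{\partial V(x)}{\partial x}\bigr|\,\bigl(\sigma^{\textup{math}}(x)+\sigma_i^{\textup{comp}}(x)\bigr)$, and the GP uncertainties do \emph{not} vanish at the origin (no finite data or computation makes them zero there), so $a(x)$ vanishes only as fast as $\bigl|\frac{\partial V(x)}{\partial x}\bigr| = O(\|x\|)$ and is positive near the origin whenever the uncertainty dominates the learned mean, e.g. $B_f\bigl(\sigma^{\textup{math}}(0)+\sigma_i^{\textup{comp}}(0)\bigr) > |\mu_i(0)|$. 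Meanwhile, under Assumption~\ref{assump.equilibrium point} ($g(0)=0$) one has $\|b(x)\| = O(\|x\|^2)$, so in the regime $a>0$ the formula gives $\|\pi^{\mathrm{explicit}}(x)\| \ge 2a(x)/\|b(x)\| = O(1/\|x\|)$, which blows up rather than tends to zero. Even reading the hypothesis $g(x)\neq 0$ literally at the origin (which contradicts Assumption~\ref{assump.equilibrium point}), both $a$ and $\|b\|$ then scale like $\|x\|$ and the ratio tends to a generically nonzero constant, so continuous extension with $\pi^{\mathrm{explicit}}(0)=0$ still fails. A correct treatment needs an extra hypothesis — e.g. that the combined uncertainty vanishes at the equilibrium — or must settle for convergence to a neighborhood rather than asymptotic stability. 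To be fair, the paper's own proof is silent on well-posedness at the origin and establishes only the pointwise decrease condition away from it, so your proposal matches the paper where it matters; it is only the added continuity claim that overreaches.
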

\begin{proof}
Defining $a(x)$ and $b(x)$ as in \eqref{eq.ax} and \eqref{eq.bx}, we have
\begin{equation} \nonumber
\begin{aligned}
&\frac{\partial{V(x)}}{\partial{x}}\mu_i(x)+ B_{f} \cdot  \left| \frac{\partial V(x)}{\partial x}\right| \cdot \left( \sigma_i^{\textup{comp}}(x) + \sigma^{\textup{math}}(x) \right) 
\\
+& \frac{\partial V(x)}{\partial x}g(x)\pi^{\mathrm{explicit}}(x) \\
=& a(x) + b^{\top}(x) \left(-\frac{a(x)+\sqrt{a^2(x)+\|b(x)\|^4}}{\|b(x)\|^2} b(x)\right)
\\
=& -\sqrt{a^2(x)+\|b(x)\|^4} < 0.
\end{aligned}
\end{equation}
\end{proof}

The control policy described in \eqref{eq.explicit policy} can be viewed as an extension of Sontag's universal formula \cite{lin1991universal}. The key difference lies in our theorem's additional consideration of the model's mathematical and computational uncertainties. If the model learning is sufficiently accurate, for example, using sufficient data and ample computation, our approach reduces to Sontag's universal formula.
\begin{remark}
We have rigorously proven that by using a computation-aware GP model, we can synthesize a stable controller in computation-constrained systems. However, there are certain trade-offs involved. Intuitively, when computational uncertainty is present, the complexity of model learning increases, as noted in remark \ref{remark.computation complexity}. However, this increase in complexity is usually not critical and can often be disregarded \cite{wenger2022posterior}. Meanwhile, compared to approaches that do not account for computational uncertainty, the constraints in solving the min-norm optimization problem are tighter, often leading to larger control signals.
\end{remark}
\section{Simulations}
In this section, we consider two simulations of the canonical control tasks to show the effectiveness of computation-aware learning, computation-aware stability analysis, and computation-aware controller design.
\subsection{Nonlinear 1D System}
Consider a canonical 1D system \cite{berkenkamp2016safe} described by 
\begin{equation}\label{eq.1D system model}
\dot{x} = f(x) + u,    
\end{equation}
where $x \in \mathbb{R}$ is the system state, and the system dynamics $f(x)$ is sampled from a GP with zero mean and a composite kernel, i.e., $f \sim \mathcal{GP}(0, k)$. To achieve a reasonably good performance, the kernel $k$ is chosen as a product of a linear kernel and a Matérn kernel, defined as $k\left(x, x^{\prime}\right) \triangleq k_{\text{linear}}\left(x, x^{\prime}\right) * k_{\text{Matérn}}\left(x, x^{\prime}\right)$ \cite{berkenkamp2016safe}.

\begin{figure*}[htp]
    \centering
    \includegraphics[width=1.0\linewidth]{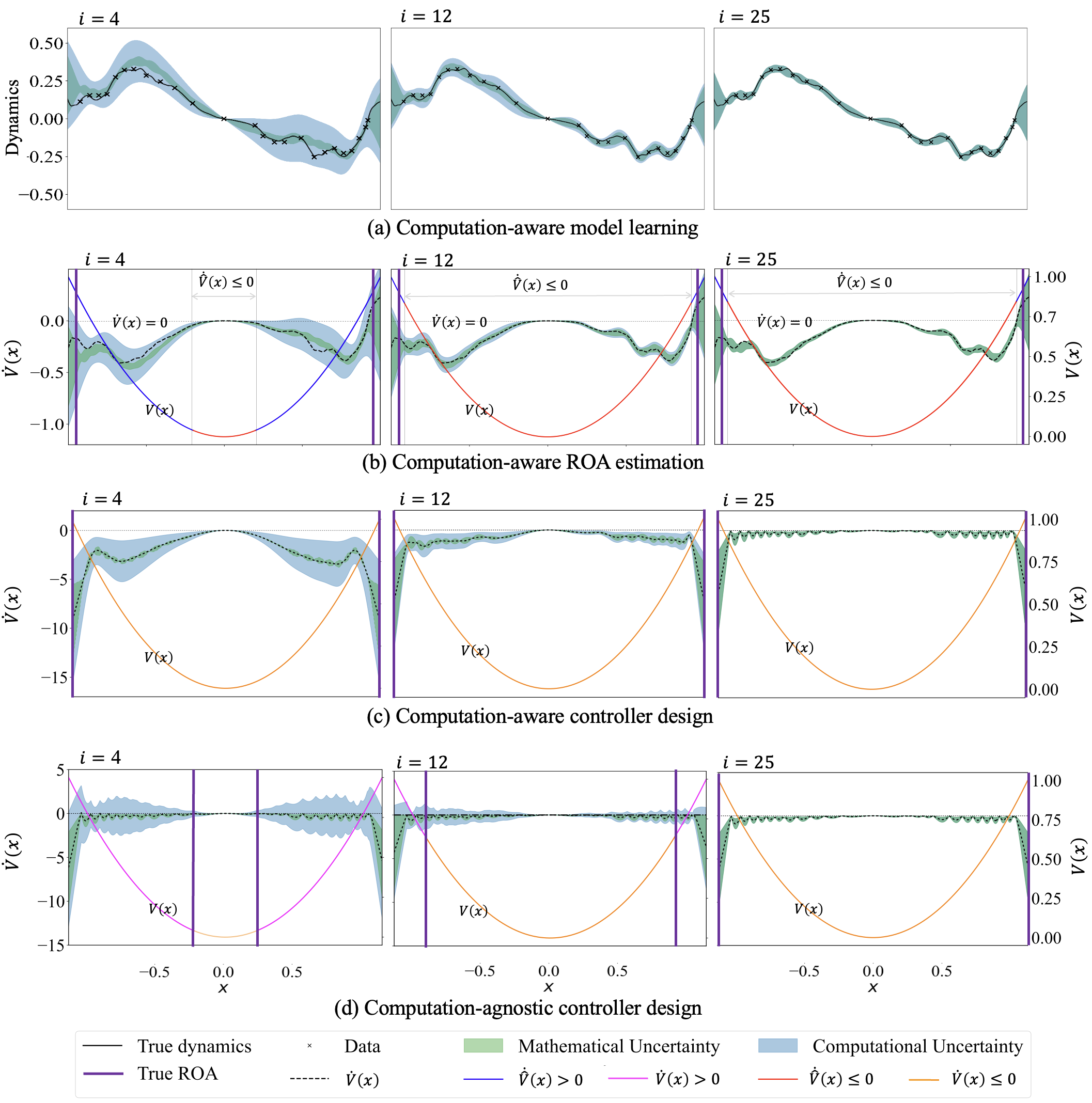}
 \caption{Comprehensive evaluation of model learning and stable control in a nonlinear 1D System \eqref{eq.1D system model}. During the GP model training, we systematically gather 25 data points per online update and vary the CG iterations for model learning (\(i=4\), \(i=12\), \(i=25\)). 
  The combined uncertainty of the learned dynamics, Lyapunov function $V(x)$ and its derivative $\dot{V}(x)$ in~\eqref{eq.asymptotic stability} respectively, decompose into the mathematical uncertainty 
  (\mysquare[mplblue!40]) and computational uncertainty (\mysquare[mplgreen!40]).
  (a) Computation-aware model learning. The computational uncertainty diminishes with increasing \(i\), while mathematical uncertainty remains constant. (b) Computation-aware ROA estimation. The estimated ROA grows with each iteration, converging towards the true ROA using a given control policy \(\pi(x)=-2.5  x\). (c) Computation-aware controller design. Our method \eqref{eq.clf-SOCP} always ensures a negative \(\dot{V}(x)\), signifying stable control, which is impressive even when the model is learned with only 4 iterations. (d) Computation-agnostic controller design. This design (setting $\sigma_i^{\textup{comp}} \equiv 0$ in \eqref{eq.clf-SOCP}), which omits computational uncertainty, yields a much smaller ROA compared to the counterparts in (c), potentially leading to unsafety in regions where \(\dot{V}(x)\) is positive.}
    \label{fig.1D_example}
\end{figure*}

\begin{figure}[!t]
    \centering
    \includegraphics[width=0.99\linewidth]{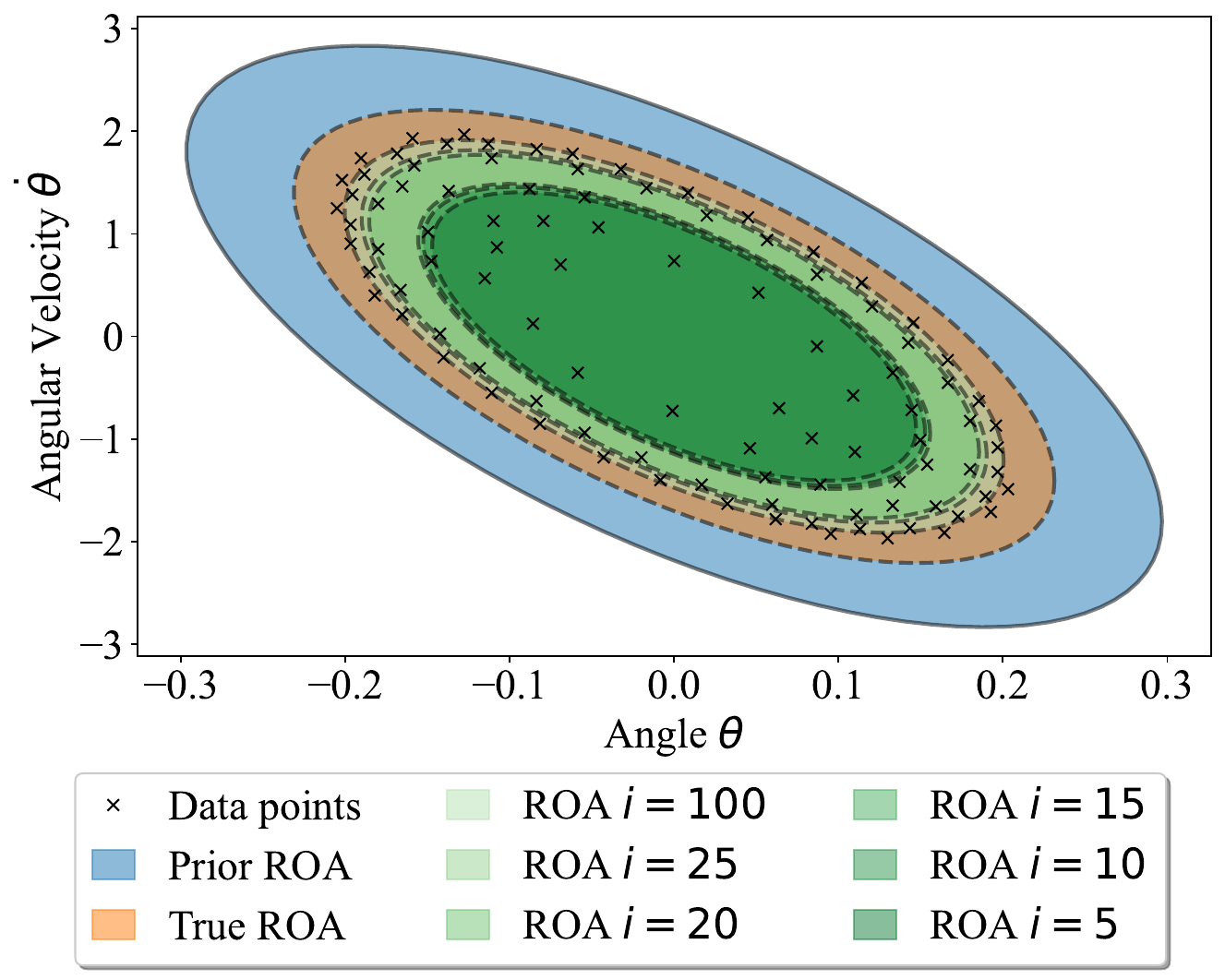}
    \caption{Computation-aware ROA estimation for inverted pendulum system given the linear quadratic controller. The ROA estimate using the prior model exceeds the true ROA. In contrast, the ROA estimates
using computation-aware GP models do not exceed the true ROA and increase with
the number of CG iterations. }
    \label{fig.2D}
    \vspace{-0.5cm}
\end{figure}

 As demonstrated in Fig.~\ref{fig.1D_example}, we decompose the uncertainty of the learned GP model into two types: mathematical uncertainty and computational uncertainty. The mathematical uncertainty of the GP dynamical model remains constant since the number of acquired data points is fixed. In contrast, the computational uncertainty decreases as the number of iterations increases. Notably, when the number of iterations equals the number of data points, we achieve ideal computation, and the computational uncertainty is reduced to zero. Furthermore, the true model is encompassed within the error bar of the combined uncertainty but exceeds that of the mathematical uncertainty alone. This indicates that the combined uncertainty provides a tight worst-case bound on the relative error between the posterior mean of the GP dynamical model and the true system. It also suggests that relying solely on mathematical uncertainty is inadequate for effective model error quantification.

Building on this computation-aware GP model, we quantify the ROA using Theorem \ref{theorem.ROA}, adhering to the fixed feedback control policy $\pi(x) = -2.5x$, as established in \cite{berkenkamp2016safe}. As shown in Fig.~\ref{fig.1D_example}, we observe that the ROA estimate increases as the computational uncertainty decreases, but never exceeds the true ROA, which is within the interval $[-0.921, 0.921]$. This observation implies that our ROA estimates are conservative approximations of the actual ROA. Additionally, we note that the ROA does not change linearly or uniformly with the number of iterations. For example, in this simple one-dimensional system, the ROA determined with 12 iterations is the interval $[-0.915, 0.915]$, and for 25 iterations, it becomes $[-0.918, 0.918]$. This suggests that a computation with 12 iterations is sufficient to accurately estimate the ROA.

Finally, we implement our proposed explicit control policy \eqref{eq.explicit policy} to construct a stable controller using the GP model acquired by 4, 12, and 25 CG iterations. For comparison, we use a similar control policy form but assume that the computational uncertainty is zero. To better evaluate their stability, we directly depict the true ROA using the real system model instead of estimating it with the GP model. Fig.~ \ref{fig.1D_example} clearly shows that the computation-aware controller achieves stability throughout the interval of $[-1,1]$. On the contrary, the controller without computational awareness leads to a much smaller ROA. For example, in the case of 4 CG iterations, the system is stable only within the interval of $[-0.2, 0.2]$. This implies that computational uncertainty is indispensable for designing a stable controller.

\subsection{Inverted Pendulum}
We further investigate the ROA estimation using the canonical inverted pendulum system \cite{berkenkamp2016safe}:
\begin{equation}\label{eq.pendulum dynamics}
\begin{aligned}
\begin{bmatrix}
\dot{\theta}(t) \\
\ddot{\theta}(t)
\end{bmatrix}
=
\begin{bmatrix}
\dot{\theta}(t) \\
\frac{mgl \sin(\theta(t)) - \mu\dot{\theta}(t)}{ml^2}
\end{bmatrix} + \begin{bmatrix}
0 \\
\frac{1}{ml^2}
\end{bmatrix}u(t)
.
\end{aligned}    
\end{equation}
Here, $\theta$ represents the angle, $m = 0.15 \mathrm{kg}$ is the mass, $l=0.5 \mathrm{m}$ is the length, and $\mu=0.05$ is the friction coefficient, which are consistent with \cite{berkenkamp2016safe}.

We assume that our knowledge is limited to a linear approximation of the dynamics \eqref{eq.pendulum dynamics} at the upright equilibrium point. In this approximation, we neglect friction and consider the mass to be $0.05\mathrm{kg}$ lighter. This linearized model serves as the nominal model and, thus, as the prior model for GP model learning. For estimating the ROA, we employ a linear quadratic regulator (LQR) based on the nominal model as the baseline controller, which uses the same weighted matrices as in \cite{berkenkamp2016safe}. In addition, the level set of its quadratic Lyapunov function is leveraged to estimate the ROA.

In this study, we collect 100 data points and present the ROA estimates for various numbers of CG 
iterations. As shown in Fig.~\ref{fig.2D}, the ROA estimate using the prior model is excessively large, exceeding the true value. In contrast, the ROA estimates using computation-aware GP models are more conservative. These estimates do not exceed the true ROA and increase with the number of CG iterations. Furthermore, to demonstrate the critical role of computational uncertainty in ROA estimates for computation-constrained systems, we compare our approach with a computation-agnostic method \cite{berkenkamp2016safe}, which overlooks computational uncertainty in estimating ROA.
From Fig.~\ref{fig.2D comparision}, we observe that for 5 CG iterations, the ROA estimate of the computation-agnostic method can exceed the true value. This suggests that neglecting computational uncertainty can lead to overly optimistic estimates of a system's stability, potentially misclassifying unstable regions as stable.
\begin{figure}[!t]
    \centering
    \includegraphics[width=0.98\linewidth]{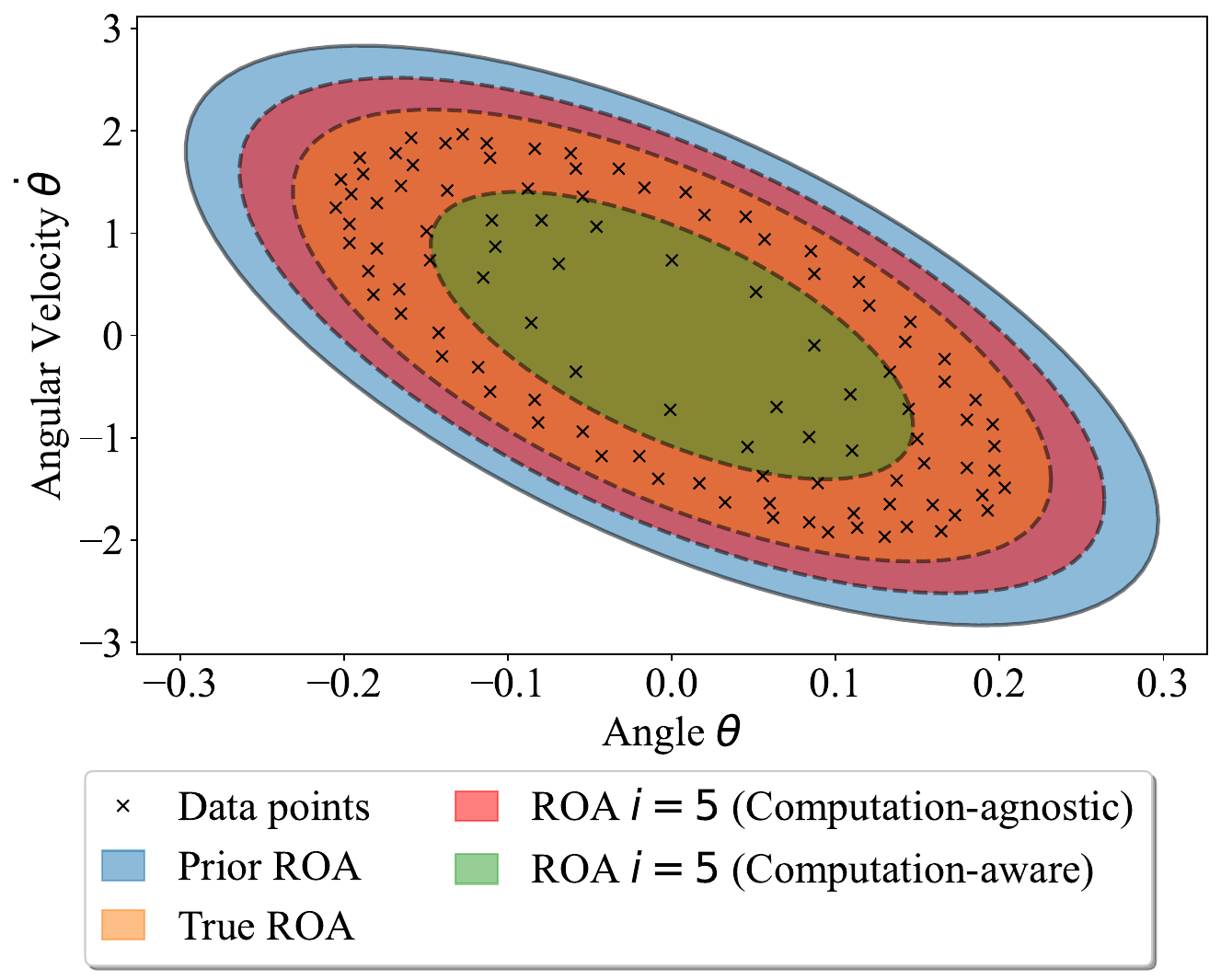}
    \caption{Comparison of computation-aware (Our method) and computation-agnostic \cite{berkenkamp2016safe} ROA estimation for inverted pendulum system under computation constraints. Neglecting computational uncertainty can lead to
overly optimistic estimates of a system’s stability, potentially
misclassifying unstable regions (\mysquare[red]) as stable.}
    \label{fig.2D comparision}
\end{figure}
\section{Experiment on Quadrotor Tracking
}\label{exp:quad}
In this section, we present an experiment on a quadrotor tracking task with a payload to evaluate the effectiveness of our proposed framework. The experimental setup is
composed of a motion capture system with 6 cameras, a
WiFi router for communication and a customized quadrotor with a flight controller Pixhawk 2.4.8 and an onboard Linux computer (Jetson Orin NX 16GB RAM). We retrofitted the quadrotor with 4 reflective infrared markers to acquire accurate
position and attitude using the motion capture system, see Fig.~\ref{fig.quadrtor}. Additionally, the accurate velocity and acceleration of the quadrotor can be obtained from the positional differences.

\begin{figure}
    \centering
    \includegraphics[width=0.5\linewidth]{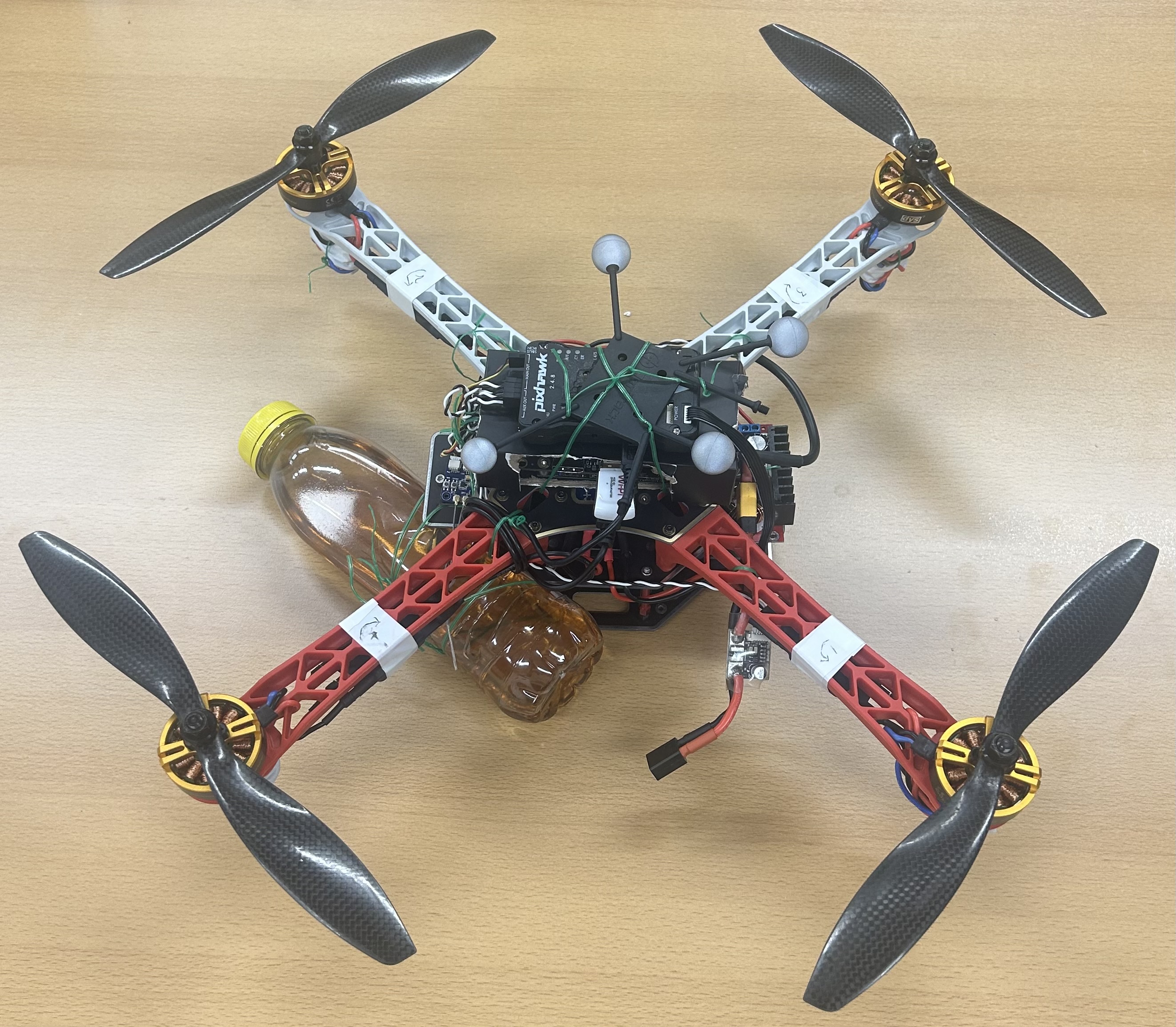}
    \caption{Customized quadrotor equipped with a Pixhawk 2.4.8 flight controller and a Jetson Orin NX onboard computer, carrying a bottle of water.}
\label{fig.quadrtor}
\end{figure}

Considering the system dynamics of a quadrotor model \cite{shi2019neural}:
\begin{subequations}
\begin{align}
\dot{p} &= v, &  
m\dot{v} &=m{g_v}+R{f}_u + {f}_d,\label{eq.quadrotor position} \\ 
\dot{R}&=RS(\omega), & 
J\dot{\omega} &= J \omega \times \omega  + \tau_u + \tau_d,\label{eq.quadrotor attitude}
\end{align}
\end{subequations}
Here, \eqref{eq.quadrotor position} and \eqref{eq.quadrotor attitude} describe the position and attitude dynamics of the quadrotor, respectively. The variables $p \triangleq \left[p_x,\, p_y,\, p_z \right]^{\top} \in \mathbb{R}^3$, $v\in \mathbb{R}^3$, $R \in \mathrm{SO}(3)$ and ${\omega} \in \mathbb{R}^3$  represent the global position, velocity, attitude rotation matrix, and body angular velocity, respectively. The symbols $m$ and $J$ denote the mass and inertia matrix, while $S$  denotes the skew-symmetric mapping. $g_v \triangleq \left[0,\, 0,\, -g\right]^{\top}$ is the gravity vector; ${f}_u = \left[0,\, 0,\, T\right]^\top$ and ${\tau}_u = \left[\tau_x,\, \tau_y,\, \tau_z\right]^\top$ are the
total thrust and body torques. 

In our experiment, we consider the task where a quadrotor carries a bottle of water and attempts to follow a single continuous trajectory, as shown in Fig.~\ref{fig.intro}. Here, $f_d$ and $\tau_d$ represent the unknown disturbance forces and disturbance torques, respectively. These disturbances are originated from the complex slosh dynamics \cite{ibrahim2005liquid} and the asymmetric centroid caused by the swaying water in the bottle, which are difficult to model accurately. To address the disturbance forces $f_d$, we use GP to learn them online. Following \cite{shi2019neural}, we employ a different, highly robust controller to manage the attitude dynamics, thus the disturbance torque $\tau_d$ is not our concern. 
The frequency of the position control loop \eqref{eq.quadrotor position} is set to 10 Hz to accommodate online GP learning and controller synthesis, whereas the frequency of the attitude control loop \eqref{eq.quadrotor attitude} is 250 Hz, which is the default setting in the Pixhawk flight controller.

To fully predict the disturbance forces \(f_d\), we use position \(p\), velocity \(v\) and rotation \(R\) as inputs to the GP model. Additionally, we employ the relationship \(f_d = m\dot{v} - mg_v - Rf_u\) to calculate \(f_d\), which serves as the measurement for the GP model. For every set of 20 data points, we perform online learning for both the computation-aware and computation-agnostic GP models using $i=5$, $i=10$, and $i=15$ CG iterations, respectively. For trajectory tracking, we define a sliding  variable $s$, which is a manifold on which the tracking error $\tilde{p} \triangleq p - p_d$ converges to zero exponentially:
\begin{equation}\nonumber
s \triangleq \dot{\tilde{p}} + \Lambda \tilde{p} = \dot{{p}} - \dot{{p}}_d + \Lambda \tilde{p},  
\end{equation}
where $\Lambda$ is a positive definite diagonal matrix and $p_d$ is the desired trajectory. Following \cite{shi2019neural}, we use the sliding variable to construct the Lyapunov function $V(s) = \frac{1}{2}m \|s\|^2$. Taking the time derivative on $V(s)$, we have
\begin{equation}\nonumber
\begin{aligned}
\dot{V}(s) &= ms^{\top} \dot{s} \\
&=ms^{\top} \left( 
\ddot{p} - \ddot{p}_d + \Lambda \dot{\Tilde{p}} 
\right)
\\
&=s^{\top} \left( 
m{g_v}+R{f}_u + {f}_d - m\ddot{p}_d + \Lambda m \dot{\Tilde{p}} 
\right).
\end{aligned}
\end{equation}

\begin{figure}[t]
    \centering
    \begin{minipage}{\linewidth}
        \includegraphics[width=\linewidth]{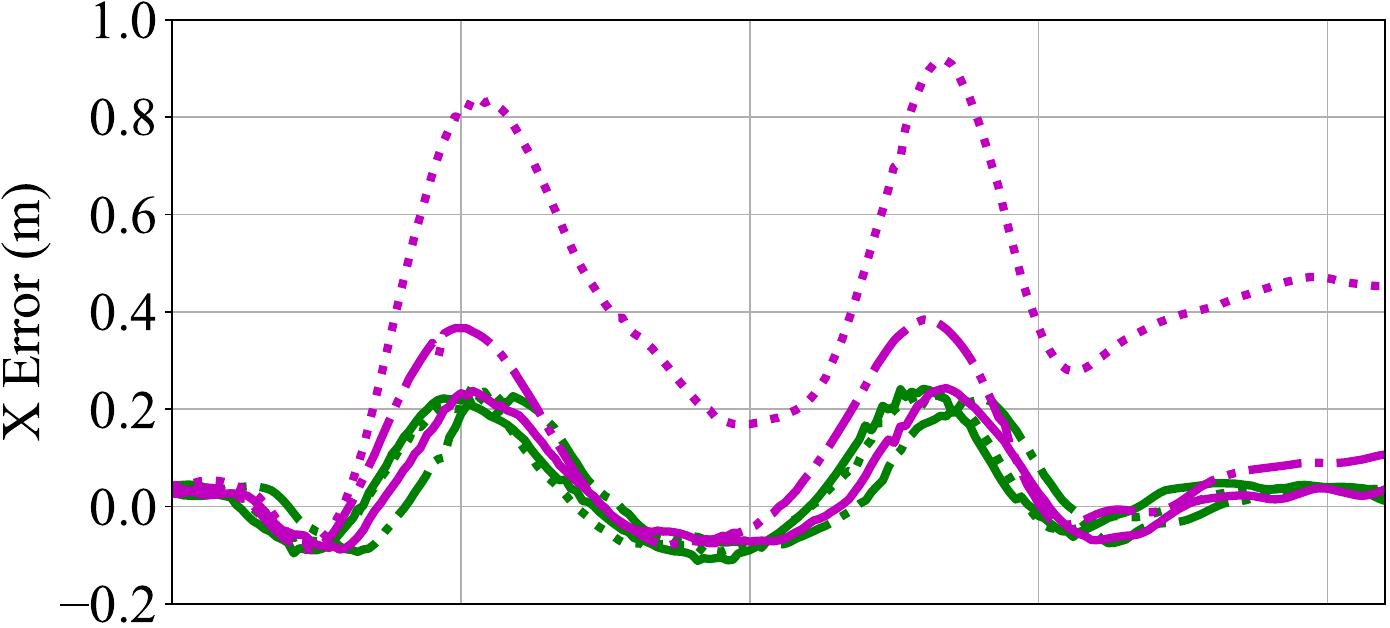}
    \end{minipage}\\[1ex]  
    \begin{minipage}{\linewidth}
        \includegraphics[width=\linewidth]{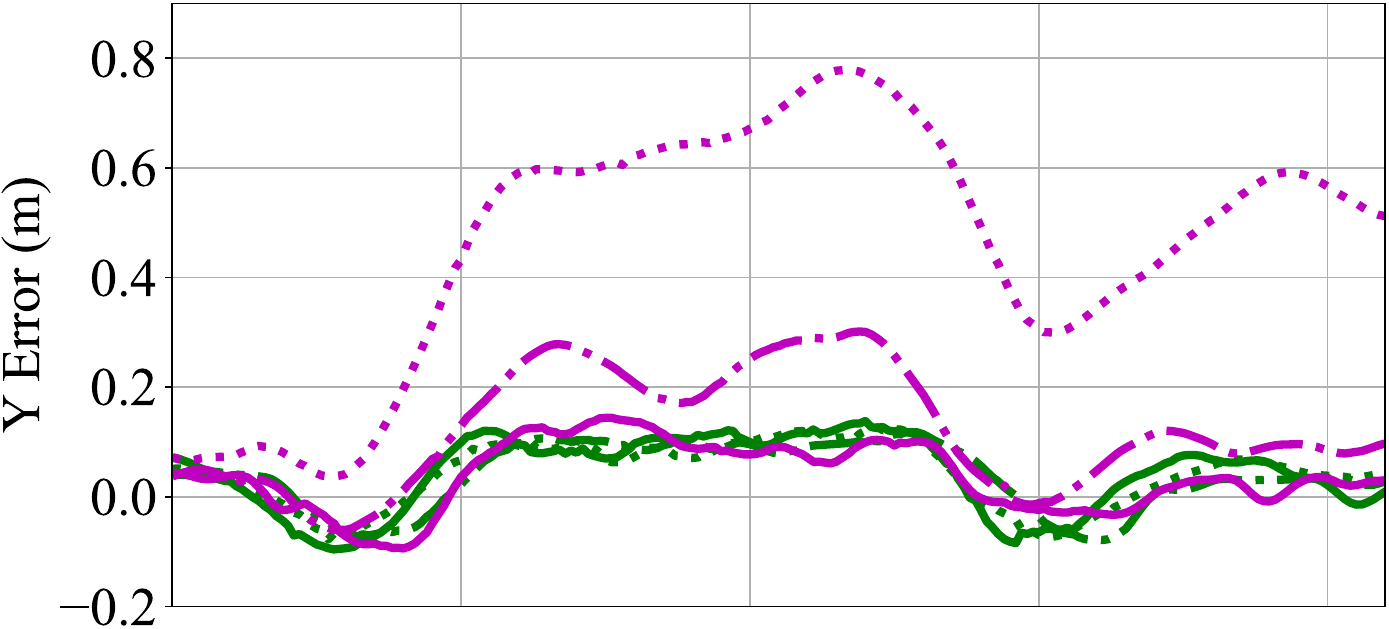}
    \end{minipage}\\[1ex]
    \begin{minipage}{\linewidth}
        \includegraphics[width=\linewidth]{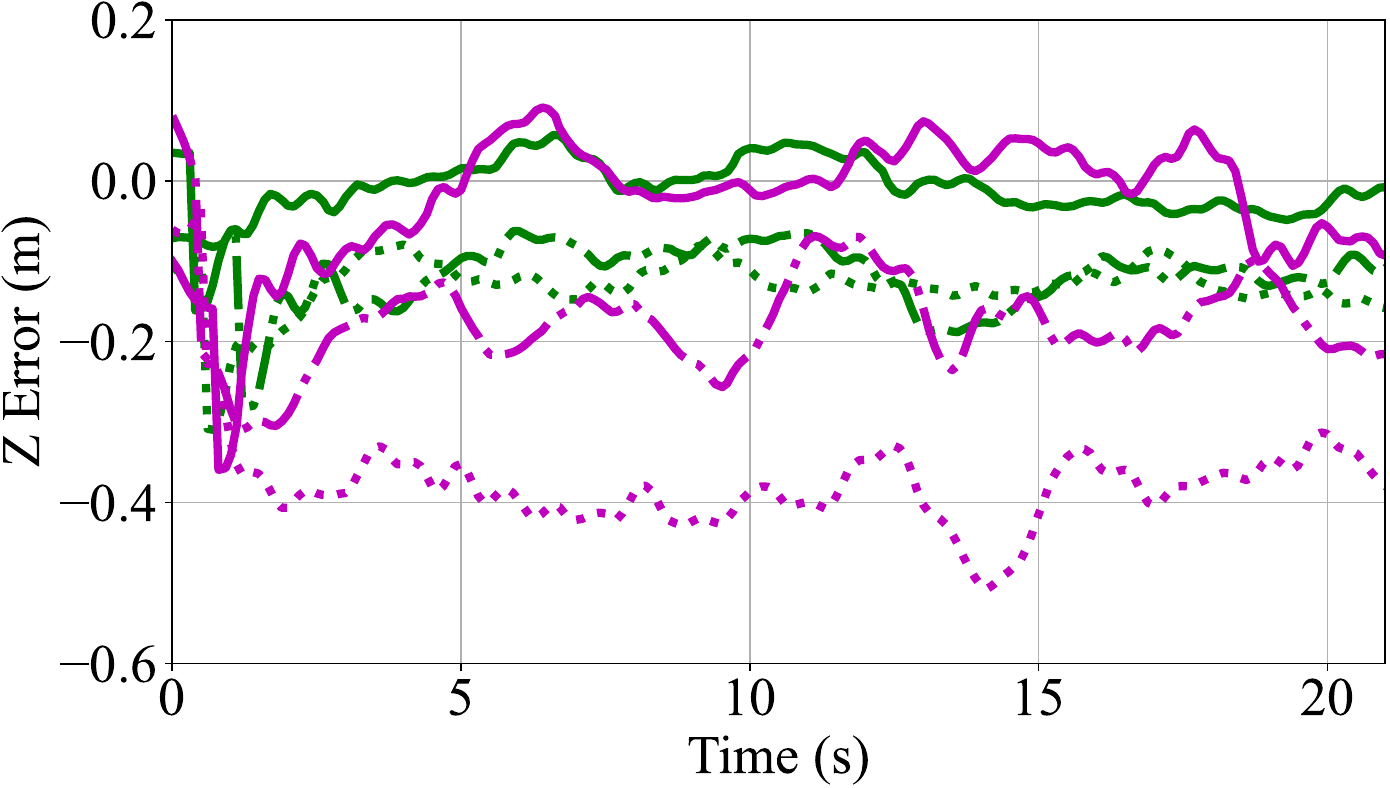}
    \end{minipage}
    \begin{minipage}{\linewidth}
    \centering

\includegraphics[width=0.75\linewidth]{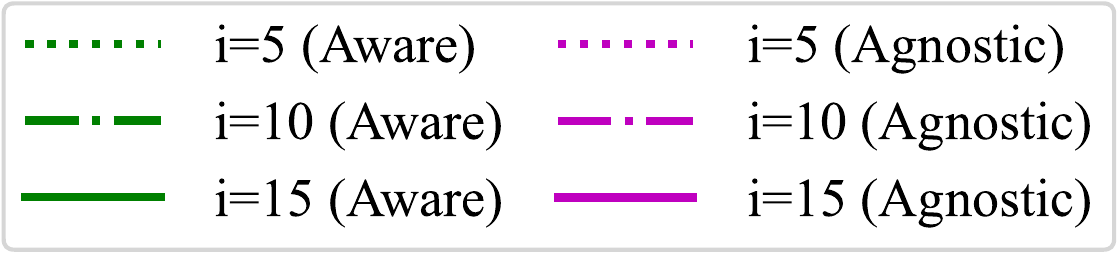}
    \end{minipage}    \caption{Comparsion of the tracking error for computation-aware and computation-agnostic controllers.}\label{fig.tracking error}
\end{figure}

As shown in Section \ref{sec.stability analysis}, $\dot{V}(s)$ can be bounded by the sum of three terms $\dot{\bar{V}}_i$, ${\dot{V}^{\textup{comp}}}_i$ and ${\dot{V}^{\textup{math}}}$, which are defined as 
\begin{equation}\nonumber
\begin{aligned}
\dot{\bar{V}}_i(s) &= s^{\top} \left( 
m{g_v}+R{{f}}_u + \hat{f}_d - m\ddot{p}_d + \Lambda m \dot{\Tilde{p}} 
\right),\\ {\dot{V}^{\textup{math}}}(s)&= |s|^{\top} \cdot B_f \cdot \sigma^{\textup
{math}} 
,\\
{\dot{V}^{\textup{comp}}}_i(s) &= |s|^{\top} \cdot B_f \cdot \sigma_i^{\textup
{comp}},
\end{aligned}
\end{equation}
where \(\hat{f}_d\), \(\sigma^{\textup{math}}\), and \(\sigma_i^{\textup{comp}}\) denote the posterior mean and the mathematical and computational uncertainties of the learned GP model of $f_d$, respectively. Additionally, $B_f$ is defined as the upper bound of the RKHS norm according to Lemma~\ref{lemma.worst case error}. To avoid solving the SOCP problem, we utilize Proposition \ref{prop.explicit form} to construct an explicit control policy. For comparison purposes, we also synthesize another explicit form by setting $\sigma_i^{\textup{comp}}$ to zero, referring to the resulting controller as a computation-agnostic controller.

In Fig.~\ref{fig.tracking error}, we plot the tracking error for both the computation-aware and computation-agnostic controllers across different CG iterations. As shown in Fig.~\ref{fig.tracking error}, as the number of CG iterations decreases, the tracking error for the computation-agnostic controller increases significantly, indicating that constrained computation can indeed negatively impact tracking performance. However, the degradation is slight for our proposed computation-aware controller, and even 5 CG iterations achieve better performance in the $X$, $Y$, and $Z$ directions than the 10 CG iteration case for the computation-agnostic controller.

Interestingly, for the computation-aware controller, the position error in the $X$ and $Y$ directions with 5 and 10 CG iterations can even be better than with 15 CG iterations. This may be due to two reasons: First, slosh dynamics primarily cause effects akin to changes in the quadrotor's gravity, thus greatly affecting the $Z$ direction; Second, as we have mentioned, our attitude dynamics employ a highly robust controller that is distinct from the position dynamics, potentially compensating for inaccuracies in the position dynamics control loop. However, it is important to note that for the computation-agnostic scenario, due to the lack of consideration for computational uncertainty, the $X$ and $Y$ errors still increase greatly as the number of CG iterations decreases.

\begin{table}[t] 

\renewcommand{\arraystretch}{1.5}

\caption{{Root Mean Square Error of Position Tracking}}

		\centering

{

{	\begin{tabular}{c c c c c c c}

		\hline
            
		\hline

		 & \multicolumn{3}{c}{Computation-Aware} & \multicolumn{3}{c}{ Computation-Agnostic} \\

    \hline

		  & $i=5$ & $i=10$  & $i=15$ & $i=5$ & $i=10$  & $i=15$
    \\

		 \hline

		$\tilde{p}_x$ (m) & 0.0915 & 0.0911  & 0.0977 &  0.467 & 0.161  & 0.0949
          \\

		\hline 

		$\tilde{p}_y$ (m) & 0.0664 & 0.0655  & 0.0752 & 0.509 & 0.154  & 0.0685 \\

     \hline 

	    $\tilde{p}_z$ (m) & 0.142 &  0.122 & 0.0387 & 0.387  & 0.197  & 0.0765 \\

		 \hline

\end{tabular}}}
\label{tab.tracking error}
\end{table}

The average computation time for each control loop for GP dynamical model learning is presented in TABLE \ref{tab.computation time}. As shown in TABLE \ref{tab.computation time}, the computation time for dynamical GP learning increases with the number of CG iterations. For a fixed $i$, the computation-aware GP exhibits a marginal increase in computation time as it additionally needs to quantify computational uncertainty, compared to the computation-agnostic counterpart. However, given its improvement in tracking performance (with a reduction in tracking error on the $X$, $Y$, and $Z$ axes by approximately $0.37m$, $0.44m$, and $0.24m$ respectively for
$i=5$), the trade-off is worthwhile.

\begin{table}[t] 

\renewcommand{\arraystretch}{1.5}

\caption{{Average Computation Time for Each Control Loop}}

\label{table:error_p_2}

		\centering

{

{	\begin{tabular}{c c c c c c c}

		\hline
            
		\hline

		 & \multicolumn{3}{c}{Computation-Aware} & \multicolumn{3}{c}{ Computation-Agnostic} \\

    \hline

		  & $i=5$ & $i=10$  & $i=15$ & $i=5$ & $i=10$  & $i=15$
          \\

		\hline 
		GP (s)& 0.0155 & 0.0327  & 0.0452 & 0.0152 & 0.0311  & 0.0434 
  \\
     \hline 

\end{tabular}}}
\label{tab.computation time}
\end{table}

\section{Conclusions and Discussions} 
\label{sec.conclusion}
\textbf{Conclusion: }
In this paper, we propose a computation-aware framework for GP dynamical model learning to ensure stable control in robotic systems subjected to computational constraints. We thoroughly investigate the impacts of constrained computations on model learning errors by utilizing Gaussian processes. We find that computational errors in model learning are inevitable and can lead to deterioration in control performance. Subsequently, we quantify the consequences of these computational errors on system stability by evaluating the region of attraction. Finally, we present a novel, robust controller design methodology that incorporates these computational considerations using second-order cone programming. The effectiveness of the framework is demonstrated through several canonical control tasks, elucidating its potential to enhance control performance under computational constraints.

\textbf{Discussion: } 
It's crucial to emphasize that our method extends beyond stable control in robotics. Issues such as safety and passivity in robotic systems, which are characterized by energy-like functions, can also be addressed in a similar manner. A more detailed explanation can be found in Appendix \ref{appendix.energy shaping control}. Besides, the primary idea behind integrating computational uncertainty into GP is the probabilistic approximation of the inverse of the Gram matrix \cite{hennig2022probabilistic,wenger2022posterior}. Although this idea is deeply rooted in the architecture of GPs, the innovative concept of computational uncertainty may offer potential for extension to other probabilistic models, 
including Bayesian neural networks \cite{hernandez2015probabilistic} and deep Gaussian processes \cite{damianou2013deep}.

\section*{Acknowledgements}
This work was supported by the Consolidator Grant ``Safe
data-driven control for human-centric systems" (CO-MAN) of
the European Research Council (ERC) under grant agreement IDs 864686.
We would also like to express our gratitude to Ribhav Ojha and Samuel Belkadi, undergraduate students at the University of Manchester, for their invaluable assistance in building and assembling the experimental hardware.

\bibliographystyle{IEEEtran}
\bibliography{main}

\appendix
\subsection{Proof of Lemma \ref{lemma.worst case error}}\label{appendix.proof of lemma 1}

\begin{proof}
First, we will prove that the latent function \(f + gu \in
\mathcal{H}_{k}\), where $\mathcal{H}_{k}$ is the RKHS associated with GP kernel $k$. Under Assumption \ref{assump.bounded RKHS}, we have $f(x)=\sum_{p=1}^{\infty} \alpha_p k^f\left(x, x_p\right)$, $g(x)=\sum_{q=1} ^{\infty} \beta_q k^g\left(x, x_q\right)$, for some real coefficients \( \{\alpha_p\},  \{\beta_q\}\) and points \( \{x_p\}, \{x_q\} \). Thus, the function \( f(x)+g(x)u \) can be expressed as
\begin{equation}\nonumber
f(x) + g(x)u = \sum_{p=1}^{\infty} \alpha_p k^f(x, x_p) + u \sum_{q=1}^{\infty} \beta_q k^g(x, x_q).
\end{equation}
Because the GP kernel function corresponds to a composite kernel $k(x, u, x^{\prime}, u^{\prime}) = k^f(x, x^{\prime}) + uk^g(x, x^{\prime})u^{\prime}$, we have
\begin{equation}\nonumber
\begin{aligned}
k^f(x, x^{\prime}) &= k(x, u, x^{\prime}, 0),
\\
uk^g(x, x^{\prime}) &= k(x, u, x^{\prime}, 1) - k(x, u, x^{\prime}, 0).
\end{aligned} 
\end{equation}
To show that \( f+gu \in \mathcal{H}_k \), we need to express \( f+gu \) as a linear combination of \( k \). Setting \( u^{\prime} = 1 \) in the definition of \( k \), we get
\begin{equation} \nonumber
\begin{aligned}
&f(x) + g(x)u \\
=& \sum_{p = 1}^{\infty} \alpha_p k(x, u, x_p, 0) + \sum_{q=1}^{\infty} \beta_q k(x, u, x_q, 1) 
\\
-& \sum_{q=1}^{\infty} \beta_q k(x, u, x_q, 0),    
\end{aligned}
\end{equation}
implying that \( f(x) + g(x)u \in \mathcal{H}_k \). Thus, there exists $B_{f,g} > 0$, such that $\|f + gu\|_{\mathcal{H}_k} \leq B_{f,g}$.
Then \eqref{eq.pointwise convergence} can be proved by leveraging Corollary 1 in \cite{wenger2022posterior}. 
\end{proof}

\subsection{Proof of Corollary \ref{corollary.discretization}}\label{appendix.proof of corollary 1}

\begin{proof}
By Assumption \ref{assump.bounded RKHS}, the function $f$ and $g$ are both Lipschitz continuous with Lipschitz constants $L_f$ and $L_g$ satisfying $L_f^2=2 B_f^2 \|k^f\|_{\infty}\left\|\frac{\partial k^f}{\partial x}\right\|_{\infty}$ and $L_g^2=2 B_g^2 \|k^g\|_{\infty}\left\|\frac{\partial k^g}{\partial x}\right\|_{\infty}$ ,
and bounded by
$\left\|f \right\|_{\infty} \leq B_f^2\|k^f\|_{\infty}$ and $\left\|g \right\|_{\infty} \leq B_g^2\|k^g\|_{\infty}$ \cite{berkenkamp2016safe,steinwart2008support}. 
For $x, x^{\prime} \in \mathcal{X}$, the derivative of the Lyapunov function $\dot{V}(x)$ satisfies
\begin{equation}\nonumber
\begin{aligned}
&\left|\dot{V}(x) - \dot{V}(x^{\prime}) \right|
\\
=& \left|\frac{\partial V(x)}{\partial x} \left( f(x)+g(x)\pi(x) \right) - \frac{\partial V(x^{\prime})}{\partial x^{\prime}} \left( f(x^{\prime})+g(x^{\prime})\pi(x^{\prime}) \right)
\right|
\\
\leq & \left|f(x)+g(x)\pi(x)\right| \cdot \left|\frac{\partial V(x)}{\partial x} - \frac{\partial V(x^{\prime})}{\partial x^{\prime}} \right|
\\
&+\left|\frac{\partial V(x^{\prime})}{\partial x^{\prime}} \right| \cdot \left|f(x)+g(x)\pi(x) - f(x^{\prime}) - g(x^{\prime})\pi(x^{\prime})\right|
\\
\leq& \left(B_f^2 \|k^f\|_{\infty} + B_g^2 B_{\pi} \|k^g\|_{\infty}  \right) \cdot \left\|\frac{\partial^2 V(x)}{\partial x^2}\right\|_{\infty}\cdot |x - x^{\prime}| 
\\
&+ \sqrt{2}\left\|\frac{\partial V(x)}{\partial x}\right\|_{\infty}\cdot \left(  B_f \sqrt{ \|k^f\|_{\infty}\left\|\frac{\partial k^f}{\partial x}\right\|_{\infty}} \right) \cdot|x - x^{\prime}|
\\
&+ \sqrt{2}\left\|\frac{\partial V(x)}{\partial x}\right\|_{\infty}\cdot \left(  B_g \pi_g  \sqrt{ \|k^g \|_{\infty}\left\|\frac{\partial k^g}{\partial x}\right\|_{\infty}} \right) \cdot|x - x^{\prime}|
\\
\triangleq& L \cdot |x - x^{\prime}|,
\end{aligned}    
\end{equation}
where $L$ is the Lipschitz constant of $\dot{V}$. Then, \eqref{eq.sufficient condition discretization} is the sufficient condition of \eqref{eq.sufficient condition}, and this corollary can be proved by using Theorem \ref{theorem.ROA}.
\end{proof}

\subsection{Energy Shaping Control}\label{appendix.energy shaping control}
In this section, we demonstrate that the computation-aware learning framework is not limited to stable control tasks. Indeed, it can be extended to analyze and achieve safety and passivity based on energy-like functions, which we refer to as energy-shaping control.

Energy shaping control is extensively used in controller design. Its core idea is to synthesize a controller based on an energy-like function, reshaping the system's natural energy to achieve desired control performance. For example, in regulation and tracking tasks, Lyapunov functions are widely used to design stable controllers \cite{wen1990unified,santibanez1997strict,khansari2014learning,giesl2015review,nguyen2015optimal}. In reach-avoid tasks, barrier functions or potential functions are commonly employed to ensure the safety of controllers \cite{hsu2015control,rauscher2016constrained,ames2019control,dawson2022safe,xiao2023barriernet}. Additionally, in the control of Euler-Lagrange models, storage functions are used to achieve passivity \cite{zhao2008passivity,hatanaka2015passivity,capelli2022passivity}. 

The performance of a given controller, denoted as $\pi(x)$, can be effectively evaluated by determining the invariant set within which key properties like stability, safety, and passivity are maintained.
\begin{definition}[Invariant Set]
A set $\mathcal{S}$ is said to be an invariant set if for all $x(0) \in \mathcal{S}$, we have $x(t) \in \mathcal{S},\, \forall t >0$. 
\end{definition}

For stability analysis, ROA is the invariant set that represents the area where a given control law \(\pi(x)\) guarantees the asymptotic stability of an equilibrium point \cite{henrion2013convex, topcu2009robust}. Besides, for safety analysis, the safe set \cite{ames2019control,dawson2023safe} is the invariant set that defines the states in which the system's operation remains within safe boundaries. This ensures that safety constraints are not violated, which is significant for safety-critical systems. The invariant set is typically related to a well-defined energy function:

\begin{definition}
[Energy Function and Power Function] 
\label{def.energy and power function}
For a given policy $\pi(x)$, an energy function $E(x) \in \mathbb{C}^2$ and a power function $P(x) \in \mathbb{C}^1$ are well-defined for set $\mathcal{S}$ if the following holds: for all $x \in \mathcal{S} = \left\{x \in \mathcal{X} | E(x) \in \mathcal{E} \right\}$, if the following condition is satisfied:
\begin{equation}\label{eq.energy variation}
\dot{E}(x)=\frac{\partial E(x)}{\partial x}\left( f(x)+g(x)\pi(x) \right) \leq P(x).
\end{equation}
Then, $\mathcal{S}$ is an invariant set.
\end{definition}

In Definition \ref{def.energy and power function}, 
we call \eqref{eq.energy variation} as the energy variation condition because 
it requires the time variation of the energy function $\dot{E}(x)$ to be slower than a power function $P(x)$. 

In control theory, the terms ``energy function'' and ``power function'' are not formally defined, but the concept of using energy-like functions for analyzing control systems is widespread. For stability analysis, the energy function \(E(x)\) is often chosen as the Lyapunov function, and the power function \(P(x)\) is set by \(P(x) = -\gamma(E(x))\), where \(\gamma\) is a class \(\mathcal{K}\) function. The invariant set \(\mathcal{S}\) is defined as the ROA of the system, given by \(\mathcal{S} = \{x \in \mathcal{X} \,|\, 0 \leq E(x) \leq c\}\), with \(c > 0\). For safety analysis, the negation of the barrier function is used as the energy function, and the power function is defined as \(P(x) = \alpha(E(x))\), where \(\alpha\) is an extended class \(\mathcal{K}_{\infty}\) function. The invariant set \(\mathcal{S}\) is considered as the safe set, denoted by \(\mathcal{S} = \{x \in \mathcal{X} \,|\, E(x) \leq 0\}\).    

Specifically, for mechanical systems represented by the input-output gravity-compensated Euler-Lagrange model \cite{ortega1998euler}:
\begin{equation}\nonumber
\begin{cases}
M(\chi) \ddot{\chi} + C(\chi, \dot{\chi}) \dot{\chi} + D \dot{\chi} + \frac{\partial P(\chi)}{\partial \chi} = u, 
\\
y = \dot{\chi},
\end{cases}  
\end{equation}
where \(\chi\) represents the system's position, and \(x = \begin{bmatrix} \chi, \, \dot{\chi} \end{bmatrix}\) is the system state, \(u\) and \(y\) are the control input and output respectively. For passivity analysis \cite{romeo1998passivity,capelli2022passivity}, the energy function is often selected as the storage function of the system, satisfying \(E(x) = P(\chi) + \frac{1}{2} \dot{\chi}^T M(\chi) \dot{\chi}\), and the power function is defined as \(P(x) = \pi(x) y\) for a given policy \(\pi(x)\). The power dissipated by the system, \(P_d(x) = \pi(x) y - \dot{E}(x)\), also considered as a passivity margin, should satisfy \(P_d(x) \geq 0\) for a passive system. In this case, the invariant set $\mathcal{S}$ is defined as \(\mathcal{S} = \left\{x\in \mathcal{X} | E(x) \geq 0 \right\}\).

Using the concept of energy functions and invariant sets, the stability analysis and controller design techniques presented in this paper can be easily extended to various energy shaping controls, including those for safety or passivity properties.



\end{document}